\newtheorem{theorem}{Theorem}
\newtheorem{lemma}{Lemma}
\newtheorem{defn}{Definition}
\newenvironment{proof}[1][Proof]{\noindent\textbf{#1.} }{\ \rule{0.5em}{0.5em}}
\def\re{\mathrm{Re}}
\def\im{\mathrm{Im}}
\def\erf{\mathrm{erf}}
\def\tr{\mathrm{Tr}}
\def\halfp{{\mbox{$+\frac{1}{2}$}}}
\def\halfm{{\mbox{$-\frac{1}{2}$}}}
\def\iint{\int\int}
\newcommand{\gr}[1]{\boldsymbol{#1}}
\begin{document}

\title[Quantum tomography and nonlocality]{Quantum tomography and nonlocality}

\author{Evgeny V. Shchukin}
\email{evgeny.shchukin@gmail.com}
\address{Institute of Physics, Johannes-Gutenberg University of Mainz, Staudingerweg 7, 55128 Mainz, Germany}
\author{Stefano Mancini}
\email{stefano.mancini@unicam.it}
\address{School of Science and Technology, University of Camerino,  62032 Camerino, Italy\\
\& INFN Sezione di Perugia, I-06123 Perugia, Italy}

\begin{abstract}
We present a tomographic approach to the study of quantum nonlocality in 
multipartite systems. Bell inequalities for tomograms belonging to a generic tomographic scheme 
are derived by exploiting tools from convex geometry. 
Then, possible violations of these 
inequalities are discussed in specific tomographic realizations providing 
some explicit examples.
\end{abstract}

\pacs{03.65.Wj, 03.65.Ud}

\maketitle

\section{Introduction}
The Bell inequalities \cite{ph-1-195} demonstrate paradigmatic difference of 
quantum and classical worlds. They were originally written for 
dichotomic (spin$-\frac{1}{2}$) variables \cite{BO}. Spin$-\frac{1}{2}$ 
operators realize the Lie algebra of the $\mathrm{SU}(2)$ group. For several 
spin particles their spin operators form Lie algebra of the tensor product of 
the Lie algebras. Due to algebraic equivalence of the operators satisfying 
commutation relations of the Lie algebra constructed from particle spin 
operators and constructed from creation and annihilation operators of a 
field, one can obtain Bell inequalities also for the case of continuous 
variables besides discrete ones \cite{CV}. Beyond the specific operators 
involved in the Bell inequalities, their possible violations obviously depend on 
the state under consideration.

For a (multipartite) classical system with fluctuations, the system state is 
described by means of a joint probability distribution function of random 
variables corresponding to the subsystems. In contrast, for a (multipartite) quantum system the 
state is described by the density matrix. In view of this difference the 
calculations of the system's statistical properties (including correlations) are 
accomplished differently in classical and quantum domains.

Recently, a probability representation of quantum mechanics has been suggested 
\cite{FOUND}. This representation, equivalent to all other well known 
formulations of quantum mechanics (see, e.g. \cite{STYER}), goes back to 
\emph{quantum tomography}, a technique used for quantum state reconstruction 
\cite{REVIEW}. The approach makes use of a set of fair probabilities, 
\emph{tomograms}, to ``replace'' the notion of quantum state.
It has also been understood \cite{JRLR} that for classical statistical mechanics 
the states with fluctuations can be described as well by tomograms related to 
standard probability distributions in classical phase-space. A comparison of 
classical and quantum tomograms can be found in Ref.\cite{JRLR, PHYSD}.

Thus, in the probability representation, tomograms turned out to be a unique 
tool to describe both classical and quantum states. As a consequence they 
represent a natural bed where to place inequalities marking the boarder-line 
between quantum and classical worlds. 
Tomograms can be either  continuous or discrete variable functions depending on 
the tomographic scheme (realization). In both cases they might be directly 
used to test nonlocality. This possibility was described for symplectic 
tomography \cite{SYMTOM} in bipartite system \cite{job-5-S333}
and spin tomography \cite{SPINTOM} still in bipartite system \cite{ninni}.

Here we shall derive Bell inequalities for \emph{multipartite} systems in terms of 
tomograms belonging to a \emph{generic} tomographic scheme. Then, we shall discuss the 
possibility to violate such inequalities depending on the tomographic 
realization.

The layout of the paper is the following. In Section \ref{qtom} we formalize 
quantum tomography in a multipartite setting. Then, in Section \ref{belllike}
we derive the Bell inequalities in terms of tomograms. In Section \ref{qviol}
we provide some evidences of violations of such inequalities for spin$-\frac{1}{2}$ systems as well
as for field modes and
finally draw the conclusions in Section \ref{conclu}.

\section{Quantum tomography}
\label{qtom}

Here we briefly review the general quantum tomography approach for a single system,
by detailing three relevant cases (optical \cite{OPTTOM}, spin \cite{SPINTOM} and photon-number 
tomography \cite{PNT}) and then extend the formalism to multipartite systems.

The basic ingredients of any tomographic scheme are a Hilbert space 
$\mathcal{H}$ associated with space of the system under consideration and a pair 
of measurable sets $(X, \Lambda)$ with measures $\mu(x)$ and $\nu(\lambda)$ 
correspondingly. More precisely, the set of system states is the set 
$\mathcal{S}(\mathcal{H})$ of Hermitian non-negative trace-class operators on 
$\mathcal{H}$ with trace $1$. Usually the set $X$ is the spectrum of an 
observable of the system and the set $\Lambda$ plays the role of 
transformations. 

We use the notation $\mathcal{P}(X)$ for the set of probability distributions on 
$X$, i.e. the set of nonnegative measurable functions
$p: X \to \mathbb{R}$
normalized to one in the following sense $\int p(x)\,d\mu(x) = 1$.

Both sets $\mathcal{S}(\mathcal{H})$ and $\mathcal{P}(X)$ are closed with respect 
to the convex combinations: if $\hat{\varrho}, \hat{\sigma} \in 
\mathcal{S}(\mathcal{H})$ (resp. $p(x), q(x) \in \mathcal{P}(X)$) and $a \in [0, 1]$ 
then
\begin{equation*}
    a \hat{\varrho} + (1-a) \hat{\sigma} \in \mathcal{S}(\mathcal{H})\quad
    ({\rm resp.}\; a p(x) + (1-a) q(x) \in \mathcal{P}(X)).
\end{equation*}

\begin{defn}\label{tomomap}
A map
$\mathcal{T}: \mathcal{S}(\mathcal{H}) \to \mathbb{R}^{X \times \Lambda}$
is called tomographic map if the following three conditions are satisfied: 
\begin{enumerate}
\item 
for any $\hat{\varrho} \in \mathcal{S}(\mathcal{H})$ the image 
$\mathcal{T}(\hat{\varrho}): X \times \Lambda \to \mathbb{R}$ restricted on the 
set $X \times \{\lambda\}$ is a probability density on $X$ 
\begin{equation*}
    \mathcal{T}_\lambda(\hat{\varrho}) \in \mathcal{P}(X) \quad \forall \lambda \in \Lambda,
    \quad{\rm where}\quad
    \mathcal{T}_\lambda(\hat{\varrho}) = \mathcal{T}(\hat{\varrho})|_{X \times \{\lambda\}}: X \to \mathbb{R}.
\end{equation*}
\item
the map $\mathcal{T}$ preserves convex combinations
\begin{equation*}
    \mathcal{T}(a \hat{\varrho} + (1-a) \hat{\sigma}) = a \mathcal{T}(\hat{\varrho}) + (1-a) \mathcal{T}(\hat{\sigma}),
    \quad
    \forall\hat{\varrho}, \hat{\sigma} \in \mathcal{S}(\mathcal{H}), a \in [0, 1].
\end{equation*}
\item
the map $\mathcal{T}$ is one-to-one
\begin{equation*}
    \mathcal{T}(\hat{\varrho}) = \mathcal{T}(\hat{\sigma}) 
    \Leftrightarrow
    \hat{\varrho} = \hat{\sigma}.
\end{equation*}
\end{enumerate}
\end{defn}
These conditions have simple meaning: (i) means that the tomogram 
$\mathcal{T}(\hat{\varrho})$ of any state $\hat{\varrho}$ is a probability 
distribution on $X$ parameterized by the points of $\Lambda$, (ii) is the 
linearity condition, and (iii) requires that the tomogram of each state be 
unique, or, in other words, that any state can be unambiguous reconstructed from 
its tomogram.

In the present work we deal with tomographic maps of the following form
\begin{equation}\label{calT}
    \mathcal{T}(\hat{\varrho})(x, \lambda) \equiv p_{\hat{\varrho}}(x, \lambda) = \tr\Bigl(\hat{\varrho}\hat{U}(x,
\lambda)\Bigr),
\end{equation}
where $\hat{U}(x, \lambda)$ is a family of operators on $\mathcal{H}$ 
parameterized by points $(x, \lambda)$ of the set $X \times \Lambda$. In the 
examples considered below the state $\hat{\varrho}$ can be reconstructed from 
its tomogram $p_{\hat{\varrho}}(x, \lambda)$ according to the formula
\begin{equation}\label{eq:D}
    \hat{\varrho} = \iint_{X \times \Lambda} p(x, \lambda)
    \hat{\cal D}(x, \lambda)\,d\mu(x)\,d\nu(\lambda),
\end{equation}
for the appropriate $(x, \lambda)$-parameterized family of operators $\hat{\cal 
D}(x, \lambda)$ on $\mathcal{H}$.

The set $X$ is the spectrum of an observable $\hat{O}$ and the set $\Lambda$ is a group 
equipped with a representation (in general projective)
$ \pi: \Lambda \to \mathcal{H}$
in $\mathcal{H}$. The operators $\hat{U}(x, \lambda)$ have the following form
\begin{equation}\label{eq:Ugen}
    \hat{U}(x, \lambda) = \pi(\lambda)|x\rangle\langle x|\pi^\dagger(\lambda),
\end{equation}
where $|x\rangle$ is an eigenstate of the observable $\hat{O}$.
For a group theoretical approach to quantum tomography see \cite{jmp-41-7940}.
See also \cite{Ibort} for a relation to grupoids.


\subsection{Spin tomography}
\label{spintom}

Let us consider a system with spin $j$. In this case we have: $\mathcal{H} = 
\mathbb{C}^{2j+1}$, $X = \{-j, -j+1, \ldots, j-1, j\}$ and $\Lambda = {\rm SO}(3, 
\mathbb{R})$. We denote the elements of the sets $X$ and $\Lambda$ as $s$ and 
$\Omega$ respectively. The measure on $X$ is equal to one on each element, so 
the corresponding integral is simply the finite sum over $2j+1$ terms. The 
measure on ${\rm SO}(3, \mathbb{R})$ is Haar's one. For the group ${\rm SO}(3, \mathbb{R})$, 
parameterized with Euler angles $\Omega \equiv (\varphi, \psi, \theta)$ 
the measure $\nu(\Omega)$ reads
$\nu(\Omega) \equiv \nu(\varphi, \psi, \theta) = 
\sin\psi\,d\varphi\,d\psi\,d\theta$ and
the operator $\hat{U}$ of (\ref{eq:Ugen}) takes the form
\begin{equation}
    \hat{U}(s, \Omega) = \hat{K}(\Omega)|j, s\rangle\langle j, 
s|\hat{K}^\dagger(\Omega).
\end{equation}
Here the vectors $|j, s\rangle$, $s = -j, -j+1, \ldots, j-1, j$ are the basis of 
the space $\mathbb{C}^{2j+1}$ (eigenvectors of the spin projection $\hat{s}_z$) 
and the operators $\hat{K}(\Omega)$ are the operators of the irreducible 
representation of ${\rm SO}(3, \mathbb{R})$ in $\mathbb{C}^{2j+1}$. Their matrix 
elements are given by 
\begin{eqnarray}
    \langle j, s|\hat{K}(\Omega)|j, s^\prime\rangle& = &
e^{i(s\theta+s^\prime\varphi)}
    \sqrt{\frac{(j+s^\prime)!(j-s^\prime)!}{(j+s)!(j-s)!}} \nonumber \\
    &\times&\cos^{s+s^\prime}(\psi/2)\sin^{s^\prime-s}(\psi/2)
    P^{(s^\prime-s, s^\prime+s)}_{j-s^\prime}(\cos\psi),
\end{eqnarray}
with $P^{(\alpha, \beta)}_n(x)$ the Jacobi polynomials.

Then the tomogram $p(s, \Omega) \equiv p(s, \varphi, \psi, 
\theta)$ of (\ref{calT}) is
\begin{equation}\label{eq:ps}
    p(s, \Omega) = \langle j, 
s|\hat{K}(\Omega)\hat{\varrho}\hat{K}^\dagger(\Omega)|j, s\rangle.
\end{equation}
Due to the property
$\langle j, s|\hat{K}(\Omega)|j, s^\prime\rangle =
    (-1)^{s^\prime-s}\langle j, -s|\hat{K}(\Omega)|j, -s^\prime\rangle$,
the tomogram does not depend on the angle $\theta$, i.e. $p(s, \varphi, \psi, 
\theta) \equiv p(s, \varphi, \psi)$.

Finally, the operator $\hat{\cal D}$ of (\ref{eq:D}) results
\begin{equation*}
    \hat{\cal D}(s, \Omega) = \sum^j_{n, m = -j}\langle j, n|
    \hat{\cal D}(s, \Omega)|j, m\rangle |j, n\rangle\langle j, m|,
\end{equation*}
where the matrix elements $\langle j, n|\hat{\cal D}(s, \Omega)|j, m\rangle$ are 
given by the following expression
\begin{eqnarray}\label{eq:Dm}
    \langle j, n|\hat{\cal D}(s, \Omega)|j, m\rangle &=&
    \frac{(-1)^{s+m}}{8\pi^2}\sum^{2j}_{j_3 = 0} (2j_3+1)^2 \nonumber\\
    &\times& \sum^{j_3}_{k = -j_3}\langle j, k|\hat{K}(\Omega)|j, 0\rangle
\left(\begin{array}{ccc}
    j & j & j_3 \\
    n & -m & k
\end{array}\right)
\left(\begin{array}{ccc}
    j & j & j_3 \\
    s & -s & k
\end{array}\right),
\end{eqnarray}
in terms of Wigner $3j$-symbols.


\subsection{Optical tomography}
\label{opttom}

Here we have : $\mathcal{H} = L_2(\mathbb{R})$, $X = \mathbb{R}$ and $\Lambda = 
\{e^{i\theta}|\theta \in [0, 2\pi]\}$. The measures on $X$ and $\Lambda$ are 
Lebegue's ones. The operator corresponding to Eq.(\ref{eq:Ugen}) reads
\begin{equation}\label{eq:Uo}
    \hat{U}(X, \theta) = \hat{R}(\theta)|X\rangle\langle 
X|\hat{R}^\dagger(\theta),
\end{equation}
where $\hat{R}(\theta)$ is the rotation operator
\begin{equation*}
    \hat{R}(\theta) = 
\exp\left(i\frac{\theta}{2}(\hat{x}^2+\hat{p}^2)\right),
\end{equation*}
acting on and the canonical position $\hat{x}$  and momentum  $\hat{p}$ operators as
\begin{equation*}
\hat{R}(\theta)
\left(\begin{array}{c}
    \hat{x} \\
    \hat{p}
\end{array}\right)
\hat{R}^\dagger(\theta) =
\left(
\begin{array}{cc}
    \cos\theta & -\sin\theta \\
    \sin\theta & \cos\theta
\end{array}\right)
\left(\begin{array}{c}
    \hat{x} \\
    \hat{p}
\end{array}\right).
\end{equation*}
In other words, $\hat{U}(X, \theta)$ of (\ref{eq:Uo}) is the projector on the 
rotated eigenvector $|X\rangle$ of the position operator $\hat{x}$. The tomogram 
$p(X, \theta)$ of (\ref{calT}) is the diagonal matrix element 
\begin{equation}
\label{eq:opttom}
    p(X, \theta) = \langle 
X|\hat{R}(\theta)\hat{\varrho}\hat{R}^\dagger(\theta)|X\rangle.
\end{equation}
Furthermore, the operator $\hat{\cal D}$ of (\ref{eq:D}) results
\begin{eqnarray*}
    \hat{\cal D}(X, \theta) 
    =\frac{1}{4\pi}\int |r| 
\exp\Bigl(-ir(X-\cos\theta\hat{x}-\sin\theta\hat{p})\Bigr)\,dr.
\end{eqnarray*}


\subsection{Photon-Number tomography}
\label{pntom}

Here we have: $\mathcal{H} = L_2(\mathbb{R})$, $X = \mathbb{Z}_+ = \{0, 1, 
\ldots\}$ and $\Lambda = \mathbb{C}$. We denote the elements of the sets $X$ and 
$\mathbb{C}$ as $n$ and $\alpha$ respectively. The measure on $X$ is equal to 
one on each element and the measure on $\mathbb{C}$ is $(1/\pi)d^2\alpha$, where 
$d^2\alpha = d\re\alpha\,d\im\alpha$ is the Lebegue's measure on the real plane. 
Here, the operator $\hat{U}$ is the projector onto the displaced Fock state
\begin{equation}
    \hat{U}(n, \alpha) = \hat{D}(\alpha)|n\rangle\langle 
n|\hat{D}^\dagger(\alpha),
\end{equation}
with
\begin{equation*}
D(\alpha)\equiv\exp\left[\frac{\alpha-\alpha^*}{\sqrt{2}}\hat{x}
-i\frac{\alpha+\alpha^*}{\sqrt{2}}\hat{p}\right].
\end{equation*}
From (\ref{calT}) the tomogram $p(n, \alpha)$ reads 
\begin{equation}
\label{eq:pntom}
    p(n, \alpha) = \langle n|\hat{D}(\alpha)\hat{\varrho}\hat{D}^\dagger(\alpha)|n\rangle.
\end{equation}
Furthermore, the operator $\hat{\cal D}$ of (\ref{eq:D}) becomes in this case
\begin{eqnarray*}
    \hat{\cal D}(n, \alpha) 
=4(-1)^n\sum^{+\infty}_{m = 0} (-1)^m\hat{D}(\alpha)|m\rangle\langle 
m|\hat{D}^\dagger(\alpha).
\end{eqnarray*}


\subsection{Tomography for multi-partite systems}
\label{tommulti}

The generalization for multi-partite systems is straightforward. 

\begin{defn}
Consider a $n$-partite system with the state space $\mathcal{H}^{\otimes n}$ 
and $n$ tomographic schemes, one for each part 
with sets $(X_k, \Lambda_k)$ and operators $\hat{U}_k(x_k, \lambda_k)$ and 
$\hat{\cal D}_k(x_k, \lambda_k)$, $k = 1, \ldots, n$. The tomographic scheme for 
the whole system is then constructed as the direct product of these schemes, by 
using
\begin{eqnarray}
&&    X \equiv \prod^n_{k=1}X_k, \quad \Lambda \equiv \prod^n_{k=1}\Lambda_k,
\nonumber\\
&&    \hat{U}({\gr{x}}, \lambda) \equiv \bigotimes^n_{k=1}\hat{U}_k(x_k, 
\lambda_k), \quad
    \hat{\cal D}({\gr{x}}, \lambda) \equiv \bigotimes^n_{k=1}\hat{D}_k(x_k, 
\lambda_k),\label{eq:U2}
\end{eqnarray}
where ${\gr{x}} \equiv (x_1, \ldots, x_n)$, ${\gr\lambda} \equiv (\lambda_1, 
\ldots, \lambda_n)$ and the measures $\mu(\gr{x})$, $\nu(\gr{\lambda})$ on $X$, 
$\Lambda$ are direct products of $\mu_1(x_1), \ldots, \mu_n(x_n)$ and 
$\nu_1(\lambda_1), \ldots, \nu_n(\lambda_n)$ respectively. The tomogram 
$p(\gr{x}, \gr{\lambda})$ of a state $\hat{\varrho}$ (generalizing 
(\ref{calT})) is
\begin{equation}
    p(\gr{x}, \gr{\lambda}) = \tr\Bigl(\hat{\varrho}
\hat{U}(\gr{x}, \gr{\lambda})\Bigr).
\end{equation}
For any $\gr{\lambda} \in \Lambda$ it is a probability distribution on $X$, thus
$    \int_X p(\gr{x}, \gr{\lambda})\, d\mu(\gr{x}) = 1$.
\end{defn}

\bigskip

\noindent
\textbf{Remark.}
From the definition (\ref{eq:U2}) of the operator $\hat{U}(\gr{x}, 
\gr{\lambda})$ it immediately follows that the tomogram $p(\gr{x}, 
\gr{\lambda})$ of a factorized state
\begin{equation}\label{eq:rhof}
    \hat{\varrho} = \hat{\varrho}_1 \otimes \ldots \otimes 
\hat{\varrho}_n
\end{equation}
is also factorized, i.e.
\begin{equation}\label{eq:pf}
    p(\gr{x}, \gr{\lambda}) = p_1(x_1, \lambda_1) \ldots p_n(x_n, 
\lambda_n),
\end{equation}
where $p_k(x_k, \lambda_k)$ is the tomogram of the state $\hat{\varrho}_k$. More 
generally, the tomogram of a separable state
\begin{equation}\label{eq:rhos}
    \hat{\varrho} = \sum^{+\infty}_{i=0} a_i \hat{\varrho}^{(i)}_1 
\otimes \ldots
    \hat{\varrho}^{(i)}_n, \quad\quad a_i \geqslant 0, \quad 
\sum^{+\infty}_{i=0} a_i = 1
\end{equation}
is also separable in the following sense
\begin{equation}
    p(\gr{x}, \gr{\lambda}) = \sum^{+\infty}_{i=0} a_i 
p^{(i)}_1(x_1, \lambda_1) \ldots
    p^{(i)}_n(x_n, \lambda_n),
\end{equation}
where $p^{(i)}_k(x_k, \lambda_k)$ is the tomogram of the state 
$\hat{\varrho}^{(i)}_k$.


\section{Bell inequalities for tomograms}
\label{belllike}

Let us consider a $n$-partite system in tomographic representation, whith each 
subsystem supplied by a tomographic map ${\cal T}_k$, $k=1,\ldots,n$. 
The tomogram $p(\gr{x}, \gr{\lambda})$ of a state $\hat{\varrho}$ is a function 
of $2n$ arguments and with respect to one half of them it is a probability 
distribution. We will show that in general it cannot be considered as a 
classical joint probability.

\begin{defn}\label{def:setsYZ}
For any $k = 1, \ldots, n$ let $Y_k$ and $Z_k$ be two measurable sets such that
\begin{equation*}
    X_k = Y_k \bigcup Z_k, \quad Y_k \bigcap Z_k = \emptyset,
\end{equation*}
and for any $\lambda_k \in \Lambda_k$ let $A_k(\lambda_k)$ be a dichotomic 
random variable on $X = \prod_k X_k$ such that
\begin{eqnarray}\label{eq:rv}
    \mathbf{P}(A_k(\lambda_k) = 1) &= \int_{Y_k}
    \tr \Bigl(\hat{\varrho}\hat{U}_k(x_k, \lambda_k)\Bigr)\,d\mu_k(x_k), \nonumber\\
    \mathbf{P}(A_k(\lambda_k) = -1) &= \int_{Z_k} 
    \tr \Bigl(\hat{\varrho}\hat{U}_k(x_k, \lambda_k)\Bigr)\,d\mu_k(x_k).
\end{eqnarray}
\end{defn}
Symbolically the variables  $A_k(\lambda_k)$  can be written as 
\begin{equation}\label{eq:rv2}
    A_k(\lambda_k) =
    \left\{\begin{array}{cc}
        1 & {\rm if}\ x_k \in Y_k, \\
        -1 & {\rm if}\ x_k \in Z_k
    \end{array}\right.
\end{equation}
in the coordinate system deformed by the operator $\hat{U}_k(x_k, \lambda_k)$. 
The joint probability distribution of the random variables $A_1(\lambda_1), 
\ldots, A_n(\lambda_n)$, namely
\begin{equation*}
    p_{\varepsilon_1, \ldots, \varepsilon_n}(\lambda_1, \ldots, 
\lambda_n) =
    \mathbf{P}(A_1(\lambda_1) = \varepsilon_1, \ldots, A_n(\lambda_n) 
= \varepsilon_n),
\end{equation*}
where $\varepsilon_k = \pm 1$, is given by 
\begin{equation}\label{eq:jpd}
    p_{\varepsilon_1, \ldots, \varepsilon_n}(\lambda_1, \ldots, 
\lambda_n) =
    \int_{W_1} \ldots \int_{W_n} p(\gr{x}, \gr{\lambda})\,d\gr{x},
\end{equation}
with
\begin{equation*}
    W_k =
    \left\{\begin{array}{cc}
        Y_k & {\rm if}\ \varepsilon_k = 1, \\
        Z_k & {\rm if}\ \varepsilon_k = -1.
    \end{array}\right.
\end{equation*}
The correlation function of $A_1(\lambda_1), \ldots, A_n(\lambda_n)$ results
\begin{eqnarray}\label{eq:Edef}
    E(\lambda_1, \ldots, \lambda_n) 
    &\equiv& \bigl\langle A_1(\lambda_1) \ldots A_n(\lambda_n) \bigr\rangle\nonumber\\
    &=& \sum_{\varepsilon_1, \ldots, \varepsilon_n = \pm 1}
    p_{\varepsilon_1, \ldots, \varepsilon_n}(\lambda_1, \ldots, \lambda_n)
    \varepsilon_1 \ldots \varepsilon_n.
\end{eqnarray}
\begin{defn}
Let us fix two parameters $\lambda^{(1)}_k$ and $\lambda^{(2)}_k$ for $k = 1, 
\ldots, n$ and denote
\begin{equation}\label{eq:tomcor}
    E(j_1, \ldots, j_n) \equiv E(\lambda^{(j_1)}_1, \ldots, 
\lambda^{(j_n)}_n), \quad j_k = 1, 2.
\end{equation}
Since each index $j_k$ can take $2$ values independently on all the other indices, 
there are $2^n$ correlation functions (\ref{eq:tomcor}). Then we define by 
\begin{equation}\label{eq:e}
    \gr{e} \equiv \Bigl(E(j_1, \ldots, j_n)\Bigr) \in \mathbb{R}^{2^n}.
\end{equation}
the vector of these correlation functions with some order of multi-indices 
$(j_1, \ldots, j_n)$. 
\end{defn}
It is convenient to enumerate the functions $E(j_1, \ldots, j_n)$. For this 
purpose we use the binary base with ``digits" $1$ and $2$ instead of $0$ and 
$1$. This means that we use the following one-to-one correspondence 
\begin{equation*}
    \{1, \ldots, 2^n\} \ni j \leftrightarrow (j_1, \ldots, j_n),\quad j_k = 1, 2,
\end{equation*}
where $j$ and $(j_1, \ldots, j_n)$ are related to each other according to 
\begin{equation}\label{eq:j}
    j = (j_1-1)2^{n-1} + \ldots + (j_n-1)2 + j_n.
\end{equation}
By virtue of such an ordering, the vector $\gr{e}$ (\ref{eq:e}) can be written as
\begin{eqnarray}\label{eq:eo}
    \gr{e} = \Bigl(E(1), \ldots, E(2^n)\Bigr)
    =\Bigl(E(1, \ldots, 1), \ldots, E(2, \ldots, 2)\Bigr) \in 
\mathbb{R}^{2^n}.
\end{eqnarray}

What region $\Omega_n \subset \mathbb{R}^{2^n}$ fills the vector $\gr{e}$ of 
(\ref{eq:eo})? Due to the fact that each observable has only two outcomes $\pm 
1$ it follows that each correlation function (\ref{eq:tomcor}) is bounded by one 
by absolute value and, so, the set $\Omega_n$ is a subset of $2^n$-dimensional 
cube ${[-1, 1]}^{2^n}$.

Suppose that it is possible to model the result of the measurement by a random 
variable, $A_k(j_k)$, which can take two values $\pm 1$. We assume that these 
random variables can be arbitrary correlated. 
\begin{defn}\label{defn:joint}
Let us define by
\begin{eqnarray}\label{eq:p}
    p(i_1(1), &\ldots, i_n(2)) \equiv 
    \mathbf{P}\Bigl(A_1(1) = i_1(1), \ldots, A_n(2) =
    i_n(2)\Bigr),
\end{eqnarray}
the joint probability distribution for random variables $A_k(j_k)$, with 
$i_k(j_k) = \pm 1$. Since each index $i_k(j_k)$ can take independently $2$ values 
we have $2^{2n}$ numbers (\ref{eq:p}) which completely describe statistical 
characteristics of the random variables under consideration. We enumerate them 
with a single number $i = 1, \ldots, 2^{2n}$ using the same rule as for the 
correlation functions $E(j)$, namely
\begin{equation*}
    \{1, \ldots, 2^{2n}\} \ni i \leftrightarrow
    (i_1(1), \ldots, i_n(2)),
\end{equation*}
where $i$ and $(i_1(1), \ldots, i_n(2))$ are related to each other according to 
\begin{equation}\label{eq:i}
    i = (i_1(1)-1)2^{2n-1} + (i_n(1)-1)2 + i_n(2).
\end{equation}
Enumerated in such a way the probabilities (\ref{eq:p}) form a 
$2^{2n}$-dimensional vector
\begin{equation}\label{eq:po}
    \gr{p} \equiv (p_1, \ldots, p_{2^{2n}}) \in \mathbb{R}^{2^{2n}}.
\end{equation}
\end{defn}
The point (\ref{eq:po}) lies in the standard simplex
\begin{equation}\label{eq:S}
    S_{2^{2n}-1} = \left\{(p_1, \ldots, 
p_{2^{2n}})\biggm|\sum^{2^{2n}}_{i=1}p_i = 1,
    p_i \geqslant 0 \right\} \subset \mathbb{R}^{2^{2n}}.
\end{equation}

What region $\Omega_n \subset \mathbb{R}^{2^n}$ fills the vector $\gr{e}$ 
(\ref{eq:eo}) when the point $\gr{p}$ (\ref{eq:po}) runs over the simplex 
$S_{2^{2n}-1}$ (\ref{eq:S})? To answer this question we are going to explicitly 
relate $\gr{e}$ and $\gr{p}$ assuming the former expressed like classical joint 
probabilities. Then, the correlation function $E(j)$ is intended as a simple 
linear combination of $p_i$ with proper coefficients. Looking at (\ref{eq:Edef}) 
we consider such coefficients, $\mathcal{E}(j, i)$, given by the product
\begin{equation}\label{eq:Eji}
    \mathcal{E}(j, i) = i_1(j_1) \ldots i_n(j_n),
\end{equation}
where $j_k$ and $i_k(j_k)$ are ``digits" of the numbers $j$ and $i$ in the 
binary representations (\ref{eq:j}) and (\ref{eq:i}). The numbers 
$\mathcal{E}(j, i)$, $j = 1, \ldots, 2^n$, $i = 1, \ldots, 2^{2n}$ form a $2^n 
\times 2^{2n}$ matrix $\mathcal{E}_n$ and the relation between $\gr{e}$ and 
$\gr{p}$ can then be written as
\begin{equation}\label{eq:ep}
    \gr{e} = \mathcal{E}_n\gr{p}.
\end{equation}
We see that the region $\Omega_n$ is the image of the standard simplex 
$S_{2^{2n}-1}$
\begin{equation}\label{eq:Omega}
    \Omega_n = \mathcal{E}_n(S_{2^{2n}-1}),
\end{equation}
where we do not distinguish the linear map  
$\mathcal{E}_n: \mathbb{R}^{2^{2n}} \to \mathbb{R}^{2^n}$
and its matrix $\mathcal{E}_n$ in the standard bases of $\mathbb{R}^{2^{2n}}$ 
and $\mathbb{R}^{2^n}$.

Thus, we have reduced the problem of finding Bell inequalities to find the set 
$\Omega_n$. It means that  the problem of finding Bell inequalities boils down 
to a standard problem of convex geometry, referred to as convex hull problem: 
given points $\gr{c}_i$ find their convex hull, or facets of maximal dimension 
of the corresponding polytope (for notions of convex geometry see, e.g. \cite{Gruber07}).

Now we will get the Bell inequalities explicitly. Note that permutations of the 
columns of the matrix $\mathcal{E}_n$ do not change their convex hull and that 
they correspond to permutations of the components of the vector $\gr{p}$ or 
different orderings of the probabilities (\ref{eq:po}), so one can safely permute  
columns of $\mathcal{E}_n$ without altering (\ref{eq:ep}).


\begin{theorem}\label{theoBell}
The set $\Omega_n$ is specified by the (Bell) inequalities for the vector of the 
correlation functions 
\begin{equation}\label{eq:Bell}
    (\gr{e}, H_{2^n}\gr{c}) \leqslant 2^n, \quad
    \forall \gr{c} = (\pm 1, \ldots, \pm 1).
\end{equation}
The matrix $H_{2^n}$ is the Hadamard matrix recurrently defined as
\begin{eqnarray*}
H_{2^n} = \underbrace{H_2 \otimes \ldots \otimes H_2}_{n}\,,\quad
    H_2 =
    \left(\begin{array}{cc}
        1 & 1 \\
        1 & -1
    \end{array}\right).  
\end{eqnarray*}
\end{theorem}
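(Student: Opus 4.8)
The plan is to characterize the polytope $\Omega_n = \mathcal{E}_n(S_{2^{2n}-1})$ by identifying its vertices and then its facet-defining inequalities. First I would observe that a linear image of a simplex is the convex hull of the images of its vertices; the vertices of $S_{2^{2n}-1}$ are the standard basis vectors $\gr{e}_i$, $i = 1, \ldots, 2^{2n}$, so $\Omega_n = \mathrm{conv}\{\mathcal{E}_n \gr{e}_i\}$, and the columns of $\mathcal{E}_n$ are exactly the candidate vertices. By definition \eqref{eq:Eji}, the $i$-th column is the vector $\bigl(\mathcal{E}(j,i)\bigr)_{j=1}^{2^n} = \bigl(i_1(j_1)\cdots i_n(j_n)\bigr)_{j}$, where the bits $i_k(j_k) \in \{\pm 1\}$ are free. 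Thus each column is a vector in $\{-1,1\}^{2^n}$ of the special product (deterministic local) form: its $(j_1,\ldots,j_n)$-entry factors as a product of $2n$ independent signs $i_k(1), i_k(2)$. So $\Omega_n$ is the convex hull of the $2^{2n}$ \emph{local deterministic correlation vectors}.

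Next I would pass to the dual description. A point $\gr{e} \in \mathbb{R}^{2^n}$ lies in this convex hull iff it satisfies every inequality $(\gr{e}, \gr{h}) \leqslant \max_i (\mathcal{E}_n \gr{e}_i, \gr{h})$ for all $\gr{h}$; it suffices to restrict $\gr{h}$ to a set that generates all facets. The claim is that the facets are captured by $\gr{h} = H_{2^n}\gr{c}$ with $\gr{c} \in \{\pm 1\}^{2^n}$, with right-hand side exactly $2^n$. The key computation is therefore to evaluate, for each local deterministic column $\gr{v}^{(i)}$ with entries $v^{(i)}_{(j_1,\ldots,j_n)} = \prod_{k=1}^n i_k(j_k)$, the inner product $(\gr{v}^{(i)}, H_{2^n}\gr{c})$ and show it is $\leqslant 2^n$, with equality attained. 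Here I would exploit the tensor-product structure: $H_{2^n} = H_2^{\otimes n}$ and the deterministic column itself factorizes as $\gr{v}^{(i)} = \bigotimes_{k=1}^n (i_k(1), i_k(2))$. Hence
\[
    (\gr{v}^{(i)}, H_{2^n}\gr{c}) = \Bigl(\bigotimes_{k=1}^n (i_k(1), i_k(2)),\ (H_2^{\otimes n})\,\gr{c}\Bigr).
\]
Writing $\gr{c}$ in a tensor basis and using $(a\otimes b, (M\otimes N)(u\otimes v)) = (a,Mu)(b,Nv)$ reduces everything to the single-party building block: for a sign vector $(s,t) \in \{\pm 1\}^2$ and a column $(c_1, c_2)$ of signs, $\bigl((s,t), H_2 (c_1,c_2)^{\mathrm T}\bigr) = s(c_1 + c_2) + t(c_1 - c_2)$, which is $\pm 2$ for every choice (one of $c_1+c_2, c_1-c_2$ is $0$ and the other is $\pm 2$). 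Multiplying $n$ such factors gives $(\gr{v}^{(i)}, H_{2^n}\gr{c}) = \pm 2^n$, so in particular $\leqslant 2^n$, and by choosing the signs $i_k(j_k)$ appropriately one saturates it; this shows $\Omega_n \subseteq \{\gr{e} : (\gr{e}, H_{2^n}\gr{c}) \leqslant 2^n \ \forall \gr{c}\}$ and that each such inequality is tight on a vertex.

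The remaining and, I expect, main obstacle is the reverse inclusion: that these $2^{2^n}$ inequalities \emph{suffice}, i.e. every $\gr{e}$ satisfying all of them is in $\Omega_n$. The clean way is to note that the rows of $H_{2^n}$ (the characters $\chi_S$, $S \subseteq \{1,\ldots,n\}$) form an orthogonal basis of $\mathbb{R}^{2^n}$ with $H_{2^n} H_{2^n}^{\mathrm T} = 2^n I$, so the map $\gr{c} \mapsto H_{2^n}\gr{c}$ is an invertible scaling; expanding $\gr{e}$ in this basis, the constraints $(\gr{e}, H_{2^n}\gr{c}) \leqslant 2^n$ for all sign vectors $\gr{c}$ are equivalent to $\sum_{S} |\widehat{e}_S| \leqslant 1$ (an $\ell^1$-type constraint on the Fourier–Hadamard coefficients $\widehat{e}_S = 2^{-n}(\gr{e}, \chi_S)$), which is exactly the statement that $\gr{e}$ is a convex combination of $\pm$ characters; but each signed character $\pm \chi_S$ is itself realizable as a local deterministic column (choose $i_k(j_k)$ to depend only on $j_k$ via the indicator of $k \in S$), hence lies among the vertices of $\Omega_n$. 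Therefore the inequality region equals $\mathrm{conv}$ of the vertices, which is $\Omega_n$. I would take care to check that the sign pattern $\mathcal{E}(j,i) = i_1(j_1)\cdots i_n(j_n)$ really does range over all $\pm\chi_S$ and, more importantly, over enough columns that the $\ell^1$-ball in Hadamard coordinates is exactly their convex hull; this combinatorial bookkeeping — matching the $2^{2n}$ bit-strings $(i_k(j_k))$ to products of characters — is the step most likely to need a careful argument rather than a one-line remark.
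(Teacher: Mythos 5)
Your argument is correct and reaches the same geometric core as the paper, but it is organized differently. The paper rearranges the columns of $\mathcal{E}_n$ into the block form $(H_{2^n}, -H_{2^n}, \ldots, H_{2^n}, -H_{2^n})$, factors $\mathcal{E}_n = H_{2^n}A_n$, observes that $A_n$ maps the simplex onto the cross-polytope $\mathcal{O}_{2^n}=\{\gr{q} : (\gr{q},\gr{c})\leqslant 1\ \forall\, \gr{c}=(\pm1,\ldots,\pm1)\}$ (the $\ell^1$ unit ball), and then transports the half-spaces $(\gr{q},\gr{c})\leqslant 1$ through $H_{2^n}$ using $H_{2^n}^{-1}=2^{-n}H_{2^n}$; this yields $\Omega_n=H_{2^n}(\mathcal{O}_{2^n})$ and the inequalities in one stroke. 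You instead prove two inclusions: the vertices of $\Omega_n$ (the deterministic product columns) satisfy every inequality with value $\pm 2^n$, and conversely any $\gr{e}$ obeying all the inequalities has Hadamard coefficients with $\sum_S|\widehat{e}_S|\leqslant 1$, hence is a convex combination of the signed characters $\pm\chi_S$, which are themselves columns of $\mathcal{E}_n$. This is the same fact --- $\Omega_n$ is the $H_{2^n}$-image of the $\ell^1$ ball --- read off in Fourier--Hadamard coordinates rather than by transforming half-spaces; what your route buys is an explicit vertex/facet picture, while the paper's factorization gets the full equality without a separate saturation argument. Your worry about the final ``combinatorial bookkeeping'' is unfounded: the $2^{2n}$ columns of $\mathcal{E}_n$ collapse to exactly the $2^{n+1}$ signed columns of $H_{2^n}$ (each repeated $2^{n-1}$ times), which is precisely the paper's block decomposition, so the convex hull of the columns and the convex hull of $\{\pm\chi_S\}$ coincide. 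One step does need repair as written: in the vertex computation you ``write $\gr{c}$ in a tensor basis'' and multiply single-party factors, but a generic $\gr{c}\in\{\pm1\}^{2^n}$ does not factorize, so the product formula $(a\otimes b,(M\otimes N)(u\otimes v))=(a,Mu)(b,Nv)$ does not apply to it directly. The fix is immediate: by symmetry of $H_{2^n}$, $(\gr{v}^{(i)},H_{2^n}\gr{c})=(H_{2^n}\gr{v}^{(i)},\gr{c})$, and $H_{2^n}\gr{v}^{(i)}=\bigotimes_{k} H_2\,(i_k(1),i_k(2))^{\mathrm{T}}$ is a tensor product of vectors of the form $(\pm2,0)$ or $(0,\pm2)$, i.e.\ equals $\pm 2^n\gr{e}_m$ for some standard basis vector $\gr{e}_m$; pairing with any sign vector $\gr{c}$ then gives $\pm 2^n\leqslant 2^n$ (this is exactly the computation the paper uses in its separability theorem).
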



\begin{proof}
The key fact in deriving the Bell inequality (\ref{eq:Bell}) is that the matrix 
$\mathcal{E}_n$ can be written in the following block form
\begin{equation}\label{eq:EH}
    \mathcal{E}_n =\Big(
    \underbrace{
    \begin{array}{ccccc}
        H_{2^n} & -H_{2^n} & \ldots & H_{2^n} & -H_{2^n}
    \end{array}
    }_{2^n}\Big)
\end{equation}
after appropriate arrangement of its columns. One can rewrite the r.h.s. 
of (\ref{eq:EH}) as the product of two matrices
\begin{equation*}
    \mathcal{E}_n = H_{2^n}
    \left(\begin{array}{ccccc}
        E_{2^n} & -E_{2^n} & \ldots & E_{2^n} & -E_{2^n}
    \end{array}\right) =
    H_{2^n}A_n,
\end{equation*}
which means that the linear map
$\mathcal{E}_n: \mathbb{R}^{2^{2n}} \to \mathbb{R}^{2^n}$
can be decomposed into two maps
\begin{equation*}
    \mathcal{E}_n = H_{2^n} \circ A_n, \quad
    A_n: \mathbb{R}^{2^{2n}} \to \mathbb{R}^{2^n}, \quad
    H_{2^n}: \mathbb{R}^{2^n} \to \mathbb{R}^{2^n}.
\end{equation*}
According to this decomposition (\ref{eq:ep}) reads
\begin{equation}\label{eq:eq}
    \gr{e} = H_{2^n}\gr{q},
\end{equation}
where the vector $\gr{q} = A_n \gr{p} \in \mathbb{R}^{2^n}$ is explicitly given 
by the following expression
\begin{equation}\label{eq:q}
    \gr{q} =
    \left(\begin{array}{c}
        p_1-p_{2^n+1}+\ldots-p_{(2^n-1)2^n+1} \\
        \vdots \\
        p_{2^n}-p_{2\cdot2^n}+\ldots-p_{2^{2n}}
    \end{array}\right).
\end{equation}

Define the following convex polytope 
$\mathcal{O}_N \subset \mathbb{R}^N$
\begin{equation}\label{eq:O}
    \mathcal{O}_N = \{\gr{x} \in \mathbb{R}^N | (\gr{x}, \gr{c}) 
\leqslant 1, \
    \forall \gr{c} = (\pm 1, \ldots, \pm 1) \}.
\end{equation}
As one can easily see the image of the standard simplex $S_{2^{2n}-1}$ is 
exactly the polytope $\mathcal{O}_{2^n}$, that is
$A_n(S_{2^{2n}-1}) = \mathcal{O}_{2^n}$.
From this fact we have
\begin{equation}\label{eq:OH}
    \Omega_n = H_{2^n}(\mathcal{O}_{2^n}).
\end{equation}

Now the Bell inequalities can be straightforwardly obtained from this relation. 
Just notice that a non-degenerate linear map $f: \mathbb{R}^N \to \mathbb{R}^N$ 
with the matrix $F$ maps a half-space
$\mathfrak{h} = \{\gr{x} \in \mathbb{R}^N | (\gr{x}, \gr{a}) \leqslant b \}$
to the half-space
$f(\mathfrak{h}) = \{\gr{y} \in \mathbb{R}^N | (\gr{y}, (F^T)^{-1}\gr{a}) 
\leqslant b \}$.
Taking into account the following representation of the polytope (\ref{eq:OH})
\begin{equation}\label{eq:Oc}
    \mathcal{O}_{2^n} = \bigcap_{\gr{c} = (\pm 1, \ldots, \pm 1)}
    \{\gr{q}|(\gr{q}, \gr{c}) \leqslant 1\},
\end{equation}
the symmetry of the Hadamard matrix $H_{2^n}$ and the formula for its inverse
$H^{-1}_{2^n} = \frac{1}{2^n}H_{2^n}$,
we get the explicit form of the set $\Omega_n$, i.e.
\begin{equation}
    \Omega_n = \bigcap_{\gr{c} = (\pm 1, \ldots, \pm 1)}
    \{\gr{e}|(\gr{e}, H_{2^n}\gr{c}) \leqslant 2^n\}.
\end{equation}
Hence, the Bell inequalities (\ref{eq:Bell}).
\end{proof}


\bigskip

\noindent
\textbf{Remark.}
Explicitly (\ref{eq:Bell}) can be written as 
\begin{equation}\label{eq:B}
    \left|\sum^2_{j_1, \ldots, j_n = 1} a_{j_1, \ldots, j_n} E(j_1, 
\ldots, j_n)
    \right| \leqslant 2^n,
\end{equation}
where the coefficients $a_{j_1, \ldots, j_n}$ are connected with the vector 
$\gr{c}$ by the following relation
\begin{equation}\label{eq:a}
    a_{j_1, \ldots, j_n} = \sum_{\varepsilon_1, \ldots, \varepsilon_n 
= \pm 1}
    c(\varepsilon_1, \ldots, \varepsilon_n) \varepsilon^{j_1-1}_1 
\ldots \varepsilon^{j_n-1}_n.
\end{equation}
The number $c(\varepsilon_1, \ldots, \varepsilon_n)$ here is the $i$-th 
component of the vector $\gr{c}$, where the binary representation of $i$ is $i = 
(\varepsilon_1 \ldots \varepsilon_n)_2$ with digits $+1$ and $-1$ instead of $0$ 
and $1$.

One can easily see that there are $2^{n+1}$ inequalities of the form
$\pm E(j_1, \ldots, j_n) \leqslant 1$.
They correspond to the functions $c(\varepsilon_1, \ldots, \varepsilon_n)$ that 
are columns of either $H_{2^n}$ or $-H_{2^n}$ and they are referred to as 
trivial inequalities.

Finally, notice that the well known CHSH inequality
\cite{CHSH} is a particular instance of (\ref{eq:B}) corresponding to $n=2$ and
\begin{equation*}
c(-1,-1)=-1,\quad c(-1,+1)=c(+1,-1)=c(+1,+1)=+1.
\end{equation*}


\begin{theorem}
Any separable state satisfies (\ref{eq:Bell}) with correlation 
functions (\ref{eq:tomcor}).
\end{theorem}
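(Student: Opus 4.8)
The plan is to show that for a separable state the tomographic correlation functions $E(j_1,\ldots,j_n)$ admit a classical joint-probability model, so that the vector $\gr{e}$ lies in $\Omega_n$ and therefore satisfies all inequalities (\ref{eq:Bell}) by Theorem~\ref{theoBell}. First I would treat a factorized state $\hat{\varrho} = \hat{\varrho}_1 \otimes \ldots \otimes \hat{\varrho}_n$. By the Remark following the multipartite definition, its tomogram factorizes as in (\ref{eq:pf}), so from (\ref{eq:jpd}) the joint distribution $p_{\varepsilon_1,\ldots,\varepsilon_n}(\lambda_1,\ldots,\lambda_n)$ factorizes into a product $\prod_k P_k(\varepsilon_k\mid\lambda_k)$, where $P_k(\pm 1\mid\lambda_k)$ are exactly the single-party probabilities (\ref{eq:rv}). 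Fixing the two settings $\lambda_k^{(1)},\lambda_k^{(2)}$ for each party, I would then define, for each choice of outcome tuple $(i_1(1),\ldots,i_n(2)) \in \{\pm1\}^{2n}$, the candidate joint probability $p(i_1(1),\ldots,i_n(2)) = \prod_{k=1}^n P_k(i_k(1)\mid\lambda_k^{(1)})\,P_k(i_k(2)\mid\lambda_k^{(2)})$; this is manifestly nonnegative and sums to one, hence it is a legitimate point $\gr{p}\in S_{2^{2n}-1}$.

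The next step is to verify that this $\gr{p}$ reproduces the tomographic correlations, i.e.\ that $\mathcal{E}_n\gr{p} = \gr{e}$ with $\gr{e}$ given by (\ref{eq:tomcor}). Summing $\mathcal{E}(j,i)\,p_i = i_1(j_1)\cdots i_n(j_n)\,p_i$ over all $i$, the product structure of $\gr{p}$ lets the sum split across the $n$ parties; for party $k$ the only surviving factor is $\sum_{i_k(j_k)=\pm1} i_k(j_k)\,P_k(i_k(j_k)\mid\lambda_k^{(j_k)})$, while the outcomes at the unused setting and at the other parties sum to $1$ by normalization. The remaining single-party sum equals $\langle A_k(\lambda_k^{(j_k)})\rangle$ computed from the marginal of $p(\gr{x},\gr{\lambda})$, and by the factorization of the tomogram this product of single-party expectations is exactly $E(\lambda_1^{(j_1)},\ldots,\lambda_n^{(j_n)}) = E(j_1,\ldots,j_n)$. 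Hence $\gr{e} = \mathcal{E}_n\gr{p} \in \Omega_n$, and (\ref{eq:Bell}) holds for factorized states.

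To pass to a general separable state (\ref{eq:rhos}), I would use the convexity already built into the problem: by linearity of the trace in $\hat{\varrho}$, the tomogram, the dichotomic probabilities (\ref{eq:rv}), the joint distribution (\ref{eq:jpd}), and hence each correlation function $E(j_1,\ldots,j_n)$ are all affine in $\hat{\varrho}$. Therefore the correlation vector of $\hat{\varrho} = \sum_i a_i\hat{\varrho}^{(i)}$ is the convex combination $\gr{e} = \sum_i a_i\gr{e}^{(i)}$ of the correlation vectors of the factorized components $\hat{\varrho}^{(i)} = \hat{\varrho}^{(i)}_1 \otimes \ldots \otimes \hat{\varrho}^{(i)}_n$. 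Each $\gr{e}^{(i)}$ lies in $\Omega_n$ by the previous paragraph, and $\Omega_n = \mathcal{E}_n(S_{2^{2n}-1})$ is a convex set (the image of a simplex under a linear map), so $\gr{e}\in\Omega_n$ as well; equivalently, one can combine the per-component classical models into a single mixed classical model with weights $a_i$. Thus every separable state satisfies (\ref{eq:Bell}).

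The only delicate point I anticipate is bookkeeping in the second step: making sure the index $j_k$ in $\mathcal{E}(j,i) = i_1(j_1)\cdots i_n(j_n)$ is correctly matched to the setting $\lambda_k^{(j_k)}$ appearing in the $k$-th factor of $\gr{p}$, and that the ``unused setting'' variables really do sum out to $1$ under the chosen enumeration (\ref{eq:i}). This is purely combinatorial — once the dictionary between the binary index $i$ and the outcome tuple $(i_1(1),\ldots,i_n(2))$ is written out explicitly, the factorization of the sum over $i$ into a product of $n$ independent single-party sums is immediate, and no estimate is needed.
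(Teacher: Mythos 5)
Your proof is correct, but it reaches the factorized-state case by a genuinely different route than the paper. You exhibit an explicit classical (local hidden variable) model: the product point $\gr{p}\in S_{2^{2n}-1}$ built from the single-party marginals $P_k(\pm 1\mid\lambda_k^{(j_k)})$, you check that $\mathcal{E}_n\gr{p}=\gr{e}$ (the sum over $i$ factorizes party by party, the unused setting summing to $1$), and you then invoke Theorem~\ref{theoBell}, i.e.\ the characterization $\Omega_n=\mathcal{E}_n(S_{2^{2n}-1})=\{\gr{e}\mid(\gr{e},H_{2^n}\gr{c})\leqslant 2^n\}$, to conclude that (\ref{eq:Bell}) holds; this is legitimate and non-circular, since Theorem~\ref{theoBell} is proved independently, and your bookkeeping worry is harmless because taking $A_k(1)$ and $A_k(2)$ independent is a perfectly admissible joint model (only one setting per party enters each $E(j_1,\ldots,j_n)$). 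The paper instead works directly on the inequality: for a factorized state it writes $E(\lambda_1,\ldots,\lambda_n)=q_1(\lambda_1)\cdots q_n(\lambda_n)$ with $q_k=p^{(k)}_1-p^{(k)}_{-1}\in[-1,1]$, uses multilinearity of the left-hand side of (\ref{eq:Bell}) in the $2n$ variables $q_k(\lambda_k^{(j_k)})$ to reduce to the extreme points $q_k=\pm 1$, observes that there $\gr{e}$ becomes a column of $\pm H_{2^n}$, and bounds $(\gr{e},H_{2^n}\gr{c})$ by $2^n$ via orthogonality of the Hadamard columns. Your argument is shorter and more standard (the LHV embedding makes ``separable $\Rightarrow$ classical model $\Rightarrow$ in $\Omega_n$'' transparent) but leans on the full statement of Theorem~\ref{theoBell}; the paper's computation is self-contained at the level of the inequality and, as a bonus, identifies the extremal correlation vectors of product states as Hadamard columns, showing the bound $2^n$ is actually attained. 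Both proofs then treat general separable states identically, by affinity of $E$ in $\hat{\varrho}$ and convexity of $\Omega_n$.
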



\begin{proof}
Let us start with a factorized state (\ref{eq:rhof}) whose 
tomogram (\ref{eq:pf}) is also factorized. Due to 
this the random variables $A_1(\lambda_1), \ldots, A_n(\lambda_n)$ are 
independent and the correlation function $E(\lambda_1, \ldots, \lambda_n)$ reads
\begin{equation}\label{eq:Eq}
    E(\lambda_1, \ldots, \lambda_n) = q_1(\lambda_1) \ldots 
q_n(\lambda_n)
\end{equation}
with
\begin{equation*}
    q_k(\lambda_k) = p^{(k)}_1(\lambda_k) - p^{(k)}_{-1}(\lambda_k),
\end{equation*}
where
\begin{equation*}
    p^{(k)}_{\varepsilon_k}(\lambda_k) = \mathbf{P}(A_k(\lambda_k) = 
\varepsilon_k), \quad
    \varepsilon_k = \pm 1.
\end{equation*}
Due to the fact that
\begin{equation*}
    p^{(k)}_1(\lambda_k) + p^{(k)}_{-1}(\lambda_k) = 1, \quad \forall 
k = 1, \ldots, n \quad
    \forall \lambda_k \in \Lambda_k,
\end{equation*}
it is clear that
$-1 \leqslant q_k(\lambda_k) \leqslant 1$.
The left hand side of the inequality (\ref{eq:Bell}) is a linear function of any 
$q_k(\lambda^{(j_k)}_k)$ where all the $q_1(\lambda^{(j_1)}_1), \ldots, 
q_n(\lambda^{(j_n)}_n)$, $j_k = 1, 2$ are considered as independent variables. A 
linear function defined on the convex set $[-1, 1]$ takes its maximum on a 
boundary point, $\pm 1$ in this case, and so, the left hand side of 
(\ref{eq:Bell}) is maximal if $q_k(\lambda^{(j_1)}_k) = \pm 1$, $j_k = 1, 2$, $k 
= 1, \ldots, n$. In such a case the vector $\gr{e}$ of correlation functions is 
a column of either $H_{2^n}$ or $-H_{2^n}$. Just note that due to (\ref{eq:Eq}) 
the vector $\gr{e}$ reads
\begin{equation*}
    \gr{e} =
    \left(\begin{array}{c}
        q_1(1) \\
        q_1(2)
    \end{array}\right)
    \otimes \ldots \otimes
    \left(\begin{array}{c}
        q_n(1) \\
        q_n(2)
    \end{array}\right),
\end{equation*}
where $q_k(j_k) = q_k(\lambda^{(j_k)}_k)$. That is to say, $\gr{e} = \gr{c}_i$ 
is the $i$-th column of $H_{2^n}$ or $-H_{2^n}$, then 
\begin{eqnarray}\label{eHc}
    (\gr{e}, H_{2^n}\gr{c}) &= \pm (\gr{c}_i, H_{2^n}\gr{c}) = 
\pm (H_{2^n}\gr{c}_i, \gr{c})=
    \pm (2^n \gr{e}_i, \gr{c}) = \pm 2^n \leqslant 2^n.
\end{eqnarray}
Here we used the orthogonality of the columns of $H_{2^n}$: $H_{2^n}\gr{c}_i = 
2^n \gr{e}_i$, where all the coordinates of $\gr{e}_i$ are zero except the 
$i$-th which is one. Hence, we have proved that all factorized states satisfy (\ref{eq:Bell}).

Let us now
consider a general separable state (\ref{eq:rhos}). Since the correlation function 
$E(\lambda_1, \ldots, \lambda_n)$ is a linear function of the state, the vector 
$\gr{e}$ is a linear combination of the vectors $\gr{e}^{(i)}$ corresponding to 
the states $\hat{\varrho}^{(i)} = \hat{\varrho}^{(i)}_1 \otimes \ldots \otimes 
\hat{\varrho}^{(i)}_n$, i.e.
\begin{equation*}
    \gr{e} = \sum^{+\infty}_{n=0} a_i \gr{e}^{(i)}.
\end{equation*}
As we have already shown each vector $\gr{e}^{(i)}$ satisfies all the Bell 
inequalities (\ref{eq:Bell}) or lies in the convex set $\Omega_n$. Once all the 
vectors $\gr{e}^{(i)}$ are in $\Omega_n$ so is their convex combination 
$\gr{e}$. This means that any separable state satisfies all the inequalities 
(\ref{eq:Bell}).
\end{proof}


\section{Quantum violations}
\label{qviol}

The Bell inequalities are of interest not because they are 
always valid but because they can be violated.
One can ask if there was a mistake in the proof of theorem \ref{theoBell}. The 
problem relies on the underlying hypothesis of locality when relating $\gr{e}$ 
with $\gr{p}$ in (\ref{eq:ep}). In doing so we have implicitly assumed (\ref{eq:Edef}) as a 
classical joint probability, which is not generally true at quantum level.

We follow Mermin \cite{prl-65-1838} to derive the only Bell 
inequality whose maximal quantum violation is the largest among all others. 

For an odd number $n$ of systems let us consider the following random variable
\begin{equation}\label{eq:Mn}
    M_n = \im \left[ \prod^n_{k=1}(A_k(1) + i A_k(2)) \right].
\end{equation}
Since each $A_k$ can take only values $\pm 1$, each term in this 
product is equal to $\sqrt{2}$ by absolute value. Furthermore, since $n$ is odd 
the whole product has the phase that is an integer multiplier of $\pi/4$. 
As a consequence we have
\begin{equation}\label{eq:Mer}
    |\langle M_n \rangle| \leqslant 2^{(n-1)/2}.
\end{equation}
Explicitly this inequality reads
\begin{equation}\label{eq:Mo}
    \left| \sum_{(j_1, \ldots, j_n) \in J} (-1)^{\delta(j_1, \ldots, 
j_n)}
    E(j_1, \ldots, j_n) \right| \leqslant 2^{(n-1)/2},
\end{equation}
where the sum here runs over the set of multi-indices $(j_1, \ldots, j_n)$ which 
contain an odd number of $2$
\begin{equation*}
    J = \Bigl \{(j_1, \ldots, j_n) \Bigm| |\{k|j_k=2\}| = 2l+1 \Bigr\}
\end{equation*}
and
\begin{equation*}
    \delta(j_1, \ldots, j_n) = l, \quad |\{k|j_k=2\}| = 2l+1.
\end{equation*}
Multiplied by $2^{(n+1)/2}$ the inequality (\ref{eq:Mo}) takes the form 
(\ref{eq:B}) and it is easy to show that it is a Bell inequality, i.e. there is 
a vector $\gr{c}$ that gives the coefficients of (\ref{eq:Mo}) (multiplied by 
$2^{(n+1)/2}$) according to (\ref{eq:a}).

We now consider an even number $n$. Let us denote the expression (\ref{eq:Mn}) 
as $M_n(1, 2)$ and the similar expression with the variables $A_k(1)$ and 
$A_k(2)$ swapped as $M_n(2, 1)$. Consider the following combination
\begin{eqnarray}\label{eq:Meo}
    \widetilde{M}_n = M_{n-1}(1, 2) (A_n(1)+A_n(2)) 
    + M_{n-1}(2, 1) (A_n(1)-A_n(2)).
\end{eqnarray}
Since $M_{n-1}(1, 2)$ is equal to $\pm 2^{n/2-1}$ and $A_n(j) = \pm 1$, 
we will have
\begin{equation}\label{eq:M2}
    |\langle \widetilde{M}_n \rangle| \leqslant 2^{n/2}.
\end{equation}
Using the explicit form (\ref{eq:Mo}) for the odd number $n-1$ one can write 
(\ref{eq:M2}) as
\begin{equation}\label{eq:Me}
    \left|\sum^2_{j_1, \ldots, j_n = 1} (-1)^{\tilde{\delta}(j_1, 
\ldots, j_n)}
    E(j_1, \ldots, j_n) \right| \leqslant 2^{n/2},
\end{equation}
where
\begin{eqnarray*}
    \tilde{\delta}(j_1, \ldots, j_n) =
    \left\{\begin{array}{cc}
    1 & {\rm if}\; j_n = 2 \; {\rm and} \; |\{k|j_k = 2\}|\ {\rm is \; nonzero \; and \; even}\\
    0 & {\rm otherwise } 
    \end{array}\right. .\nonumber\\
\end{eqnarray*}
One can see that it is a Bell inequality and multiplied by $2^{n/2}$ it takes 
the form (\ref{eq:B}). Furthermore, for $n=2$ Eq.(\ref{eq:Me}) exactly reduces to the 
CHSH inequality \cite{CHSH}.

Let us now see how the inequalities (\ref{eq:Mo}) or  
(\ref{eq:Me}) can be violated in different tomographic realizations starting from 
the following entangled state 
\begin{eqnarray}\label{eq:psi1}
    |\Psi\rangle &= \frac{1}{\sqrt{2}}
    \Bigl( |{\bf 0}\rangle + |{\bf 1}\rangle \Bigr),
\end{eqnarray}
where  ${\bf 0} = (0, \ldots, 0)$ and  ${\bf 1} = (1, \ldots, 1)$. 
Below, for the sake of simplicity, the focus will mainly be to $n=2,3$.


\subsection{Spin tomography}

Using the notation $|0\rangle \equiv |\halfm\rangle$, $|1\rangle \equiv 
|\halfp\rangle$ for the spin projection along $z$, the state (\ref{eq:psi1}) becomes
\begin{equation*}
    |\Psi\rangle = \frac{1}{\sqrt{2}}\left(|\halfm, \ldots, \halfm\rangle + 
    |\halfp, \ldots, \halfp\rangle\right),
\end{equation*}
whose tomogram, referring to Eq.(\ref{eq:Dm}), reads as
\begin{eqnarray}\label{eq:spintomex}
    p(s_1, \ldots, s_n, \Omega_1, \ldots, \Omega_n) = \frac{1}{2} \left|
    \prod^n_{j=1} \langle s_j|\hat{K}(\Omega_j)|\halfm\rangle + 
    \prod^n_{j=1} \langle s_j|\hat{K}(\Omega_j)|\halfp\rangle
    \right|^2.\nonumber\\
\end{eqnarray}

For $n=2$ we immediately get
\begin{eqnarray*}
    p(\halfp, \halfp, \Omega_1, \Omega_2) &=& p(\halfm, \halfm, \Omega_1, \Omega_2) \nonumber \\
    &=& \frac{1}{4} 
    (1 + \cos\psi_1 \cos\psi_2 + \sin\psi_1 \sin\psi_2 \cos(\varphi_1+\varphi_2)), 
\nonumber \\
    p(\halfp, \halfm, \Omega_1, \Omega_2) &=& p(\halfm, \halfp, \Omega_1, \Omega_2) \nonumber \\
    &=& \frac{1}{4}
    (1 - \cos\psi_1 \cos\psi_2 - \sin\psi_1 \sin\psi_2 \cos(\varphi_1+\varphi_2)),
\end{eqnarray*}
and the correlation function (\ref{eq:Edef}) becomes
\begin{eqnarray}\label{eq:E2spin}
    E(\Omega_1, \Omega_2) = \cos\psi_1 \cos\psi_2 + \sin\psi_1 \sin\psi_2    
\cos(\varphi_1+\varphi_2).
\end{eqnarray}
The Bell inequality (\ref{eq:Me}) reads in this case 
\begin{eqnarray}\label{eq:Bellspin}
    \Bigl|E(\Omega^{(1)}_1, \Omega^{(1)}_2) &+ E(\Omega^{(1)}_1, 
\Omega^{(2)}_2) + 
    E(\Omega^{(2)}_1, \Omega^{(1)}_2) - E(\Omega^{(2)}_1, 
\Omega^{(2)}_2)\Bigr| \leqslant 2,
\end{eqnarray}
for all $\Omega^{(j)}_k = (\varphi^{(j)}_k, \psi^{(j)}_k, \theta^{(j)}_k)$, $j,k = 
1, 2$. 
The maximum of the l.h.s. of (\ref{eq:Bellspin}) with (\ref{eq:E2spin}) is $2\sqrt{2}$ and is attained by taking e.g.
(the angles $\theta$ do not matter here)
\begin{eqnarray*}
    \Omega^{(1)}_1 &= (\varphi_1, -\pi/8, 0), & \quad 
    \Omega^{(1)}_2 = (-\varphi_1, \pi/8, 0), \\
    \Omega^{(2)}_1 &= (\varphi_1, 3\pi/8, 0), & \quad 
    \Omega^{(2)}_2 = (-\varphi_1, -3\pi/8, 0).
\end{eqnarray*}

In the case of $n=3$, from (\ref{eq:spintomex}), we have (not to overload the notation we omit the $\Omega$'s)
\begin{eqnarray*}
    p(\halfp, \halfp, \halfp) &= \frac{1}{8}[1 + \cos\psi_1 \cos\psi_2 + \cos\psi_1 \cos\psi_3 + 
\cos\psi_2 \cos\psi_3 \nonumber \\
    &- \sin\psi_1 \sin\psi_2 \sin\psi_3 \cos(\varphi_1 + \varphi_2 + \varphi_3)], \nonumber \\
    p(\halfp, \halfp, \halfm) &= \frac{1}{8}[1 + \cos\psi_1 \cos\psi_2 - \cos\psi_1 \cos\psi_3 - 
\cos\psi_2 \cos\psi_3 \nonumber \\
    &+ \sin\psi_1 \sin\psi_2 \sin\psi_3 \cos(\varphi_1 + \varphi_2 + \varphi_3)], \nonumber \\
    p(\halfp, \halfm, \halfp) &= \frac{1}{8}[1 - \cos\psi_1 \cos\psi_2 + \cos\psi_1 \cos\psi_3 - 
\cos\psi_2 \cos\psi_3 \nonumber \\
    &+ \sin\psi_1 \sin\psi_2 \sin\psi_3 \cos(\varphi_1 + \varphi_2 + \varphi_3)], \nonumber \\
    p(\halfp, \halfm, \halfm) &= \frac{1}{8}[1 - \cos\psi_1 \cos\psi_2 - \cos\psi_1 \cos\psi_3 + 
\cos\psi_2 \cos\psi_3 \nonumber \\
    &- \sin\psi_1 \sin\psi_2 \sin\psi_3 \cos(\varphi_1 + \varphi_2 + \varphi_3)], \nonumber \\
    p(\halfm, \halfp, \halfp) &= \frac{1}{8}[1 - \cos\psi_1 \cos\psi_2 - \cos\psi_1 \cos\psi_3 + 
\cos\psi_2 \cos\psi_3 \nonumber \\
    &+ \sin\psi_1 \sin\psi_2 \sin\psi_3 \cos(\varphi_1 + \varphi_2 + \varphi_3)], \nonumber \\
    p(\halfm, \halfp, \halfm) &= \frac{1}{8}[1 - \cos\psi_1 \cos\psi_2 + \cos\psi_1 \cos\psi_3 - 
\cos\psi_2 \cos\psi_3 \nonumber \\
    &- \sin\psi_1 \sin\psi_2 \sin\psi_3 \cos(\varphi_1 + \varphi_2 + \varphi_3)], \nonumber \\
    p(\halfm, \halfm, \halfp) &= \frac{1}{8}[1 + \cos\psi_1 \cos\psi_2 - \cos\psi_1 \cos\psi_3 - 
\cos\psi_2 \cos\psi_3 \nonumber \\
    &- \sin\psi_1 \sin\psi_2 \sin\psi_3 \cos(\varphi_1 + \varphi_2 + \varphi_3)], \nonumber \\
    p(\halfm, \halfm, \halfm) &= \frac{1}{8}[1 + \cos\psi_1 \cos\psi_2 + \cos\psi_1 \cos\psi_3 + 
\cos\psi_2 \cos\psi_3 \nonumber \\
    &+ \sin\psi_1 \sin\psi_2 \sin\psi_3 \cos(\varphi_1 + \varphi_2 + \varphi_3)].
\end{eqnarray*}
Thanks to these tomograms the correlation function (\ref{eq:Edef}) results
\begin{equation}\label{eq:E3spin}
E(\Omega_1,\Omega_2,\Omega_3)=-\sin\psi_1\sin\psi_2\sin\psi_3\cos(\varphi_1 + \varphi_2 + \varphi_3).
\end{equation}
Finally, the Bell inequality (\ref{eq:Mo}) in this case reads
\begin{eqnarray}
\left| E(\Omega^{(2)}_1,\Omega^{(1)}_2,\Omega^{(1)}_3)+ E(\Omega^{(1)}_1,\Omega^{(2)}_2,\Omega^{(1)}_3)
+E(\Omega^{(1)}_1,\Omega^{(1)}_2,\Omega^{(2)}_3)-E(\Omega^{(2)}_1,\Omega^{(2)}_2,\Omega^{(2)}_3)\right|\le 2.
\end{eqnarray}
Using  (\ref{eq:E3spin}) the maximum violation occurs when the l.h.s equals 4. This value can be attained by taking e.g. 
(again the angles $\theta$ do not matter here)
\begin{eqnarray*}
\psi^{(1)}_1=\psi^{(1)}_2=\psi^{(1)}_3=\pi/2, \qquad \varphi^{(1)}_1=\varphi^{(1)}_2=\varphi^{(1)}_3=5\pi/6,\nonumber\\
\psi^{(2)}_1=\psi^{(2)}_2=\psi^{(2)}_3=\pi/2, \qquad \varphi^{(2)}_1=\varphi^{(2)}_2=\varphi^{(2)}_3=\pi/3.
\end{eqnarray*}


\subsection{Optical tomography}\label{vioot}

The tomogram of the state (\ref{eq:psi1}) accordingly to (\ref{eq:opttom}) is given by
\begin{eqnarray}\label{eq:opttomexp}
    p(\gr{X}, \gr{\theta}) = \frac{1}{2\sqrt{\pi^n}}
    \left[ 1+
    2^n\prod_{i=1}^n (X_i^2) 
    +2^{(n+2)/2}\prod_{i=1}^n (X_i) 
    \cos(\theta_1+\ldots+\theta_n) \right] 
    \exp\left[-\sum_{i=1}^n X_i^2\right],
\end{eqnarray}
where $\gr{X} = (X_1, \ldots, X_n)$ and $\gr{\theta}=(\theta_1, \ldots, \theta_n)$. 

We take the sets $Y_k$ and $Z_k$ of Definition \ref{def:setsYZ} to be
\begin{equation*}
    \quad Y_k = [x, +\infty), Z_k = (-\infty, x).
\end{equation*}
For such sets and tomogram (\ref{eq:opttomexp}), 
the correlation function (\ref{eq:Edef}) results 
\begin{eqnarray}\label{eq:Et}
    E(\gr{\theta}) &= 2^{n-1}\left[ {\sf a}_{0}^n(x)+{\sf a}_{1}^n(x)\right] 
+2^n {\sf b}_{0}^n(x)\cos(\theta_1+\ldots+\theta_n),
\end{eqnarray}
where 
\begin{eqnarray*}
    {\sf a}_0(x) = -\frac{1}{2}\erf(x),\;\;\;
    {\sf a}_1(x) = -\frac{1}{2}\erf(x)+\frac{1}{\sqrt{\pi}} xe^{-x^2},\;\;\;
    {\sf b}_0(x) = \frac{1}{\sqrt{2\pi}}e^{-x^2}.
\end{eqnarray*}

We have now to insert (\ref{eq:Et}) 
into (\ref{eq:Mo}) or (\ref{eq:Me}) to get an explicit version fo the Bell inequality. 
In doing so we use a Lemma, reported in \ref{maxval}, showing that the maximal value of 
\begin{equation*}
    \sum\limits^2_{j_1, \ldots, j_n = 1} a_{j_1, \ldots, j_n}
    \cos(\theta^{(j_1)}_1 + \ldots + \theta^{(j_n)}_n)
\end{equation*}
does not exceed $2^{n+(n-1)/2}$ and this value is attained with coefficients of (\ref{eq:Mo}) or (\ref{eq:Me}).
It then follows that the maximal value $f_{n}(x)$ of the l.h.s. of  
(\ref{eq:Mo}) and of (\ref{eq:Me}) is
\begin{equation}\label{fnk}
    f_{n}(x) = \left\{\begin{array}{ccc}
         2^{n}|{\sf a}^n_{0}(x)+{\sf a}^n_{1}(x)| + 
    2^{n+(n+1)/2}| {\sf b}^n_{0}(x)|, & & n \; {\rm odd}\\
    2^{n}|{\sf a}^n_{0}(x)+{\sf a}^n_{1}(x)| + 
    2^{n+n/2} \quad\;\; | {\sf b}^n_{0}(x)|, & & n \; {\rm even}
    \end{array}\right. .
\end{equation}
Figure \ref{fig:v} illustrates the function $f_{n}(x)$ for $n=2,3$. A tiny violation of Bell inequality only occurs 
for $n=3$.

\begin{figure}
\begin{center}
    \includegraphics[scale=1]{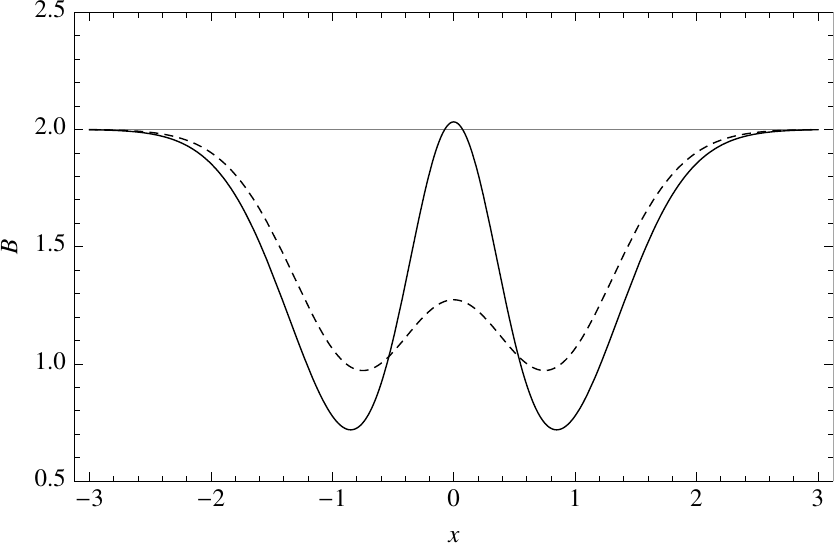}
\end{center}
\caption{Function $f_{n}$ of Eq.~(\ref{fnk}) versus $x$ for  $n=2$ (dashed line) and $n=3$ (solid line). }
\label{fig:v}
\end{figure}


\subsection{Photon-Number tomography}\label{viopnt}

Considering the state (\ref{eq:psi1}) 
its number tomogram (\ref{eq:pntom}) can be computed as
\begin{eqnarray}\label{eq:pntomex}
    p(m_1,\ldots m_n, \alpha_1, \ldots \alpha_n) &=\prod_{i=1}^n 
\frac{|\alpha_i|^{2m_i-2}}{m_i!}    e^{-|\alpha_i|^2} 
 \left| \prod_{i=1}^n \alpha_i +\prod_{i=1}^n (m_i - |\alpha_i|^2)\right|^2.
\end{eqnarray}
We further choose the sets of Definition \ref{def:setsYZ} as
$Z_1 = \ldots = Z_n = \{0\}$, $Y_1 =\ldots = Y_n = \{1,2,3, \ldots\}$.

The corresponding correlation function (\ref{eq:Edef}) for $n=2$ is
\begin{eqnarray}\label{eq:E2pn} 
E(\alpha_1, \alpha_2)=e^{-|\alpha_1|^2-|\alpha_2|^2}&& 
\left[2+4 \Re(\alpha_1 \alpha_2)+2|\alpha_1|^2|\alpha_2|^2
\right.\nonumber\\
&&\left.-\left(1+|\alpha_2|^2\right) e^{|\alpha_1|^2}
-\left(1+|\alpha_1|^2\right) e^{|\alpha_2|^2}
+e^{|\alpha_1|^2+|\alpha_2|^2}\right].
\end{eqnarray}
Furthermore, the Bell inequality for the number tomogram with $n=2$ is from (\ref{eq:Me})
\begin{eqnarray}\label{eq:Bellpn2}
    \Bigl|E(\alpha^{(1)}_1, &\alpha^{(1)}_2) + E(\alpha^{(1)}_1, 
\alpha^{(2)}_2) + 
    E(\alpha^{(2)}_1, \alpha^{(1)}_2) - E(\alpha^{(2)}_1, 
\alpha^{(2)}_2)\Bigr| \leqslant 2,
\end{eqnarray}
for all $\alpha^{(j)}_1, \alpha^{(j)}_2 \in \mathbb{C}$, $j = 1, 2$. 
Figure \ref{fig:pnt} illustrates that this inequality can be violated using (\ref{eq:E2pn}).

Analogously, from (\ref{eq:pntomex}) it follows that the 
correlation function (\ref{eq:Edef}) for $n=3$ is
\begin{eqnarray}\label{eq:E3pn} 
E(\alpha_1, \alpha_2,\alpha_3)=e^{-|\alpha_1|^2-|\alpha_2|^2-|\alpha_3|^2}&& 
\left[-4+8 \Re(\alpha_1 \alpha_2 \alpha_3)-4|\alpha_1|^2 |\alpha_2|^2 |\alpha_3|^2
\right.\nonumber\\
&&\left.+2\left(e^{|\alpha_1|^2}+e^{|\alpha_2|^2}+e^{|\alpha_3|^2}\right)
+2 |\alpha_2|^2 |\alpha_3|^2 e^{|\alpha_1|^2} \right.\nonumber\\
&&\left.+2 |\alpha_1|^2 |\alpha_2|^2 e^{|\alpha_3|^2} +2 |\alpha_1|^2 |\alpha_3|^2 e^{|\alpha_2|^2} 
\right.\nonumber\\
&&\left.-\left(1+|\alpha_1|^2 \right) e^{|\alpha_2|^2+|\alpha_3|^2}
-\left(1+|\alpha_2|^2\right) e^{|\alpha_1|^2+|\alpha_3|^2}
\right.\nonumber\\
&&\left. -\left(1+|\alpha_3|^2\right) e^{|\alpha_1|^2+|\alpha_2|^2}
+e^{|\alpha_1|^2+|\alpha_2|^2+|\alpha_3|^2}\right].
\end{eqnarray}
This time the Bell inequality for the number tomogram reads from (\ref{eq:Mo})
\begin{eqnarray}\label{eq:Bellpn3}
    \Bigl|E(\alpha^{(1)}_1, \alpha^{(1)}_2,  \alpha^{(2)}_3) + E(\alpha^{(1)}_1, \alpha^{(2)}_2,  \alpha^{(1)}_2) 
   + E(\alpha^{(2)}_1, \alpha^{(1)}_2,  \alpha^{(1)}_2) 
   - E(\alpha^{(2)}_1, \alpha^{(2)}_2,  \alpha^{(2)}_2)\Bigr| \leqslant 2.
\end{eqnarray}
This inequality, by numerical checking, results never violated with (\ref{eq:E3pn}) and an example of the behavior of the l.h.s. is shown in  figure \ref{fig:pnt}.

By also choosing $Z_1 = \ldots = Z_n = \{0,\ldots,m\}$, $Y_1 =\ldots = Y_n = \{m+1,m+2, \ldots\}$, 
with $m>0$, neither (\ref{eq:Bellpn2}) nor (\ref{eq:Bellpn3}) will result (by numerical checking) 
ever violated by using (\ref{eq:E2pn}) and (\ref{eq:E3pn}) respectively.

\begin{figure}
\begin{center}
    \includegraphics[scale=1]{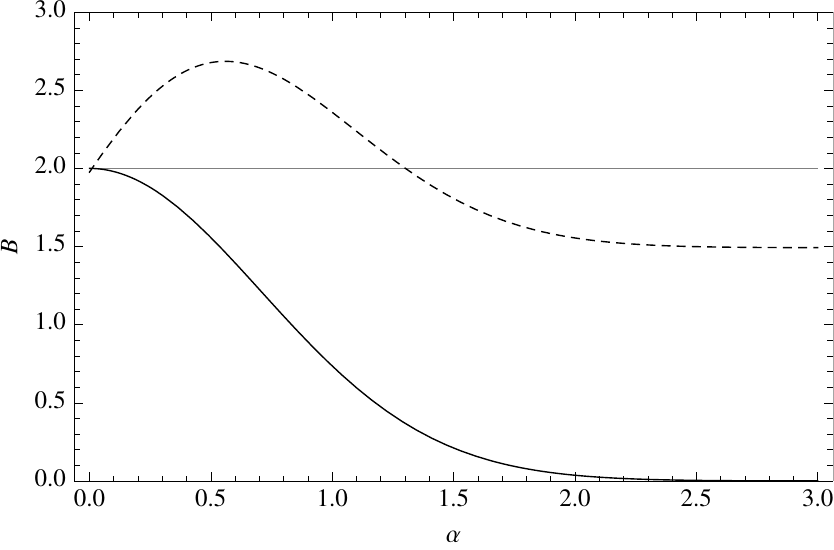}
\end{center}
\caption{The left hand side of (\ref{eq:Bellpn2}) as a function of 
$\alpha^{(2)}_2$ (dashed line); the other parameters are given by $\alpha^{(1)}_1 = 0.165$, 
$\alpha^{(1)}_2 = -0.165$, $\alpha^{(2)}_1 = -0.559$.
The left hand side of (\ref{eq:Bellpn3}) as a function of 
$\alpha^{(2)}_2$ (dashed line); 
the other parameters are given by
$\alpha^{(1)}_1 = \alpha^{(1)}_2 = 0$, $\alpha^{(1)}_3 = 5.936$,
$\alpha^{(2)}_1 4.767$, $\alpha^{(2)}_3 = 4$.}
\label{fig:pnt}
\end{figure}


\section{Concluding remarks}
\label{conclu}

As we have seen from the previous examples the use of finite (namely $2^n$)  
number of  tomograms within a tomographic realization may lead to the evidence of 
nonlocality.  
Actually, it results that finite dimensional systems by means of spin tomograms 
allow for the best evidence of nonlocality. 
In contrast, violations of Bell inequalities seem much harder to uncover
in infinite dimensional systems where ${\cal H}=L_2(\mathbb{R})$.  
Given that we have considered in both cases the same (entangled) state (\ref{eq:psi1}), this difference,
according to Ref. \cite{SAM}, must be ascribed to the diversity of observables employed (from which the tomograms stem).
However, we argue that also the way the spectrum of an observable is binned 
could play a role. 
As matter of fact the choices made in Sections \ref{vioot} and \ref{viopnt} 
for $Y_k$ and $Z_k$ do not exhaust all possibilities of these measurable sets. 
Unfortunately looking at Bell inequalities violations 
using optical tomograms (resp. photon number tomograms) by scanning the possible  
sets $Y_k$ and $Z_k$ appears a daunting task.

All in all the advantage of the tomographic approach is to allow to 
to find the large violations of Bell inequalities typical of spin systems 
also in infinite dimensional systems. In fact, introducing in $L_2(\mathbb{R})^{\otimes n}$
the following local pseudo-spin operators \cite{MISTA}
\begin{eqnarray*}
    \hat{S}^{(k)}_x &=& \sum\limits^{+\infty}_{n_k=0}
    \Bigl( |2n_k\rangle\langle2n_k+1| + |2n_k+1\rangle\langle 2n_k|
    \Bigr), \nonumber\\\
    \hat{S}^{(k)}_y &=& -i\sum\limits^{+\infty}_{n_k=0}
    \Bigl( |2n_k\rangle\langle2n_k+1| - |2n_k+1\rangle\langle 2n_k|
    \Bigr), \nonumber\\\
    \hat{S}^{(k)}_z &=& \sum\limits^{+\infty}_{n_k=0}
    (-1)^{n_k}|n_k\rangle\langle n_k|,
\end{eqnarray*}
where $|n_k\rangle$ are Fock states of the $k$th subsystem, 
we can derive the tomograms of the spin tomography 
realized with the above operators from those of 
 any other tomographic scheme (see e.g. \cite{QSO97}).
The price one ought to pay in such a 
case is the \emph{completeness} of the set of starting tomograms, 
(i.e. a number of tomograms much greater than $2^n$). 


\section*{Acknowledgments}
This work was planned some years ago after an interesting discussion with V. I. Man'ko. 
We affectionately dedicate its completion to him in occasion of his 75th birthday.  



\appendix

\section{}
\label{maxval}

\begin{lemma}
For any coefficients $a_{\gr{j}} = a_{j_1, \ldots, j_n}$ ($\gr{j} = (j_1, 
\ldots, j_n)$) of (\ref{eq:a}) and for any angles $\theta^{(1)}_k$, 
$\theta^{(2)}_k$ ($k = 1, \ldots, n$) we have
\begin{equation}\label{eq:sin}
    \frac{1}{2^n}\left|\sum^2_{\gr{j}=1} a_{\gr{j}}
    \cos\left(\theta^{(j_1)}_1 + \ldots + 
\theta^{(j_n)}_n\right)\right| \leqslant
    2^{(n-1)/2}.
\end{equation}
The equality is attained with coefficients from (\ref{eq:Mo}), (\ref{eq:Me}).
\end{lemma}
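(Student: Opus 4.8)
The claim bounds a trigonometric polynomial built from the coefficients $a_{\gr{j}}$ of \eqref{eq:a}, which are themselves Fourier-type transforms of a $\pm1$ vector $\gr{c}$. The plan is to rewrite the inner sum as the imaginary (or real) part of a product of complex exponentials, exactly as in Mermin's argument leading to \eqref{eq:Mn}–\eqref{eq:Mer}, and then bound each factor in modulus.

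First I would substitute \eqref{eq:a} into the left-hand side of \eqref{eq:sin} and interchange the order of summation over $\gr{j}$ and over $\gr{\varepsilon}$. For fixed $\gr{\varepsilon} = (\varepsilon_1, \ldots, \varepsilon_n)$, the sum over $j_k = 1,2$ of $\varepsilon_k^{j_k-1} e^{i\theta^{(j_k)}_k}$ factorizes across $k$, giving
\begin{equation*}
    \sum^2_{\gr{j}=1} a_{\gr{j}}\, e^{i(\theta^{(j_1)}_1 + \ldots + \theta^{(j_n)}_n)}
    = \sum_{\gr{\varepsilon} = \pm 1} c(\gr{\varepsilon}) \prod^n_{k=1}\Bigl(e^{i\theta^{(1)}_k} + \varepsilon_k e^{i\theta^{(2)}_k}\Bigr).
\end{equation*}
Taking the real part recovers the cosine sum in \eqref{eq:sin}. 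Now each factor $e^{i\theta^{(1)}_k} + \varepsilon_k e^{i\theta^{(2)}_k}$ has modulus at most $\sqrt{2}$ only when the two phases are suitably aligned, but in general it is bounded by $2$; so a naive factorwise bound is too weak. Instead I would use the Mermin trick: group the $2^n$ values of $\gr{\varepsilon}$ and observe that, because $c(\gr{\varepsilon}) = \pm 1$ arises as entries of $\pm H_{2^n}$ (this is the content of the Remark following Theorem~\ref{theoBell} identifying \eqref{eq:Mo}/\eqref{eq:Me} with instances of \eqref{eq:B}), the sum over $\gr{\varepsilon}$ collapses. Concretely, for the Mermin choice the sum over $\gr{\varepsilon}$ reconstructs $\prod_k (1 + \varepsilon_k)$-type selectors so that
\begin{equation*}
    \sum_{\gr{\varepsilon}} c(\gr{\varepsilon}) \prod_k\bigl(e^{i\theta^{(1)}_k} + \varepsilon_k e^{i\theta^{(2)}_k}\bigr)
    = 2^n\, \im\!\left[\prod^n_{k=1}\Bigl(e^{i\theta^{(1)}_k} + i\, e^{i\theta^{(2)}_k}\Bigr)\right]\big/\sqrt{2}^{\,?}
\end{equation*}
— I would pin down the exact constant by matching against \eqref{eq:Mn}: each factor $A_k(1) + iA_k(2)$ there is replaced by $e^{i\theta^{(1)}_k} + i e^{i\theta^{(2)}_k}$, whose modulus is at most $\sqrt{2}$ by the triangle inequality (both summands have modulus $1$). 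Hence the product of $n$ such factors is bounded by $2^{n/2}$ in modulus, and its imaginary part is therefore bounded by $2^{n/2}$; dividing by the $2^n$ prefactor that was pulled out and accounting for the $2^n$ normalization in \eqref{eq:sin} yields the bound $2^{(n-1)/2}$ after the arithmetic is reconciled.

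The main obstacle is bookkeeping: correctly threading the combinatorial identification between a general $\pm1$ vector $\gr{c}$ (equivalently general coefficients $a_{\gr{j}}$) and the special Mermin structure, so that the supremum over \emph{all} admissible $a_{\gr{j}}$ — not just the Mermin ones — is shown to be $2^{n+(n-1)/2}$. For a general $\gr{c}$ the factorwise triangle-inequality bound gives only $2^n \cdot 2^n = 2^{2n}$, far too weak, so one genuinely needs that the extremal $\gr{c}$ is (up to column permutation and sign) the Mermin one. I would argue this by noting that the cosine sum is the real part of $(\gr{c}, H_{2^n}\gr{v})$ where $\gr{v}$ is the vector with entries $\prod_k e^{i\theta^{(j_k)}_k}$ lying in the polydisc image, then maximizing $|(\gr{c}, H_{2^n}\gr{v})|$ over $\gr{c} \in \{\pm1\}^{2^n}$ and over the constrained $\gr{v}$; the outer maximum over $\gr{c}$ picks the sign pattern aligned with $H_{2^n}\gr{v}$, and the inner maximum over the phases $\theta^{(j_k)}_k$ is then the Mermin product bound $2^{n/2}$, giving $2^n \cdot 2^{n/2} = 2^{n+n/2}$ for even $n$ and $2^{n+(n-1)/2}$ for odd $n$ after the parity refinement of \eqref{eq:Meo}. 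Finally I would exhibit the phases realizing equality (the $\theta^{(j_k)}_k$ making all $n$ factors collinear), confirming sharpness and matching the coefficients of \eqref{eq:Mo}/\eqref{eq:Me}.
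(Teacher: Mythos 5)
Your opening steps are exactly the paper's: substituting (\ref{eq:a}), interchanging the $\gr{j}$- and $\gr{\varepsilon}$-sums and factorizing over $k$ reproduces (\ref{eq:prod}). The gap is in everything after that, i.e.\ in the actual bound for \emph{arbitrary} coefficients. First, your claim that $c(\gr{\varepsilon})=\pm1$ ``arises as entries of $\pm H_{2^n}$'' is false in general: in (\ref{eq:a}) the vector $\gr{c}$ is an arbitrary element of $\{\pm1\}^{2^n}$ (only the trivial inequalities correspond to Hadamard columns), and the lemma must hold for every such $\gr{c}$, so you cannot collapse the $\gr{\varepsilon}$-sum into the single Mermin product of (\ref{eq:Mn}) --- that identity holds only for the specific Mermin sign pattern, which is the ``equality attained'' half of the statement, not the upper bound. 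Second, your fallback --- optimize over $\gr{c}$ first, picking the sign pattern aligned with $H_{2^n}\gr{v}$ --- reduces the problem to bounding $\sum_{\gr{\varepsilon}}\bigl|\re\prod_{k}\bigl(e^{i\theta^{(1)}_k}+\varepsilon_k e^{i\theta^{(2)}_k}\bigr)\bigr|$, which is no longer a Mermin polynomial, and the only estimate you invoke for it (per-factor moduli) gives $\prod_k 2\bigl(|\cos\theta_k|+|\sin\theta_k|\bigr)\le 2^{n+n/2}$, a factor $\sqrt{2}$ above the claimed $2^{n+(n-1)/2}$; indeed your own conclusion states $2^{n+n/2}$ for even $n$ (and leaves an unresolved constant ``$\sqrt{2}^{\,?}$''), which is strictly weaker than (\ref{eq:sin}). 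For odd $n$ you assert the right number ``after the parity refinement'' but give no argument that survives the pointwise optimization of $\gr{c}$.

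The missing idea is the paper's reparameterization (\ref{eq:tp}): writing $\theta_k=\bigl(\theta^{(1)}_k-\theta^{(2)}_k\bigr)/2$ and $\varphi_k=\bigl(\theta^{(1)}_k+\theta^{(2)}_k\bigr)/2$, each factor becomes $2e^{i\varphi_k}$ times either $\cos\theta_k$ (for $\varepsilon_k=+1$) or $i\sin\theta_k$ (for $\varepsilon_k=-1$), so the phase $e^{i(\varphi_1+\cdots+\varphi_n)}$ is common to all $2^n$ terms and factors out of the whole $\gr{\varepsilon}$-sum. What remains is (\ref{eq:ca}), a sum in which each factor is purely real or purely imaginary with squared moduli adding to one; this is then bounded by induction on $n$: splitting off the last variable gives $A_{n-1}\cos\theta_n+iB_{n-1}\sin\theta_n$ with $|A_{n-1}|,|B_{n-1}|\le 2^{(n-2)/2}$, and Cauchy--Schwarz yields $\sqrt{|A_{n-1}|^2+|B_{n-1}|^2}\le 2^{(n-1)/2}$, uniformly over all $\pm1$-valued $c$. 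Your sketch contains no substitute for this step, and without it the bound $2^{(n-1)/2}$ --- in particular its validity for even $n$ and for non-Mermin coefficient vectors --- is not established; only the sharpness example at the end of your proposal matches (in spirit) the paper's concluding computation.
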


\begin{proof} To estimate the l.h.s. of (\ref{eq:sin}) note that
\begin{eqnarray}\label{eq:exp}
    \left|\sum^2_{\gr{j}=1} a_{\gr{j}}
    \cos\left(\theta^{(j_1)}_1 + \ldots +
    \theta^{(j_n)}_n\right)\right| 
    \leqslant  \left|\sum^2_{\gr{j}=1} a_{\gr{j}}
    e^{i\left(\theta^{(j_1)}_1 + \ldots +
    \theta^{(j_n)}_n\right)}\right|,
\end{eqnarray}
so we need to estimate the last sum. To this end we use (\ref{eq:a}) obtaining
\begin{eqnarray}\label{eq:prod}
    \left|\sum^2_{\gr{j}=1} a_{\gr{j}}
    e^{i\left(\theta^{(j_1)}_1 + \ldots +
    \theta^{(j_n)}_n\right)}\right|  
    =\sum_{\varepsilon_1, \ldots, \varepsilon_n = \pm 1}
    c(\varepsilon_1, \ldots, \varepsilon_n)
    \prod^n_{k=1}\Bigl(e^{i\theta^{(1)}_k}+\varepsilon_k 
e^{i\theta^{(2)}_k}\Bigr).
\end{eqnarray}
Next we define
\begin{equation}\label{eq:tp}
    \theta_k = \frac{\theta^{(1)}_k - \theta^{(2)}_k}{2}, \quad
    \varphi_k = \frac{\theta^{(1)}_k + \theta^{(2)}_k}{2}.
\end{equation}
so that the r.h.s. of
(\ref{eq:prod}) simplifies to
\begin{eqnarray*}
    2^n e^{i(\varphi_1+\ldots+\varphi_n)}
    \sum_{\varepsilon_1, \ldots, \varepsilon_n = \pm 1}
    c(\varepsilon_1, \ldots, \varepsilon_n) \prod^n_{k=1} 
a_k(\varepsilon_k),
\end{eqnarray*}
where $a_k(+1) = \cos\theta_k$ and $a_k(-1) = i\sin\theta_k$.
Taking into account that we use absolute value in (\ref{eq:exp}) and divide by 
$2^n$ in (\ref{eq:sin}) we have to prove the following inequality
\begin{equation}\label{eq:ca}
    \left|\sum_{\varepsilon_1, \ldots, \varepsilon_n = \pm 1}
    c(\varepsilon_1, \ldots, \varepsilon_n) \prod^n_{k=1} 
a_k(\varepsilon_k) \right|
    \leqslant 2^{(n-1)/2},
\end{equation}
for any $\pm 1$-valued function $c(\varepsilon_1, \ldots, \varepsilon_n)$. We employ
the induction method. For $n=1$ we simply have
\begin{equation*}
    \Bigl|c(+1)\cos\theta_1 + c(-1)i\sin\theta_1\Bigr| = 1 =
    2^{(1-1)/2}.
\end{equation*}
Then, we can write the sum in (\ref{eq:ca}) as
\begin{eqnarray*}
    \sum_{\varepsilon_1, \ldots, \varepsilon_n = \pm 1}
    c(\varepsilon_1, \ldots, \varepsilon_n) \prod^n_{k=1} a_k(\varepsilon_k) 
     &\equiv& A_{n-1}\cos\theta_n + iB_{n-1}\sin\theta_n,  \nonumber\\
    &=&\sum\limits_{\varepsilon_1, \ldots, \varepsilon_{n-1} = \pm 1}
    c(\varepsilon_1, \ldots, \varepsilon_{n-1}, +1) \prod^{n-1}_{k=1}
    a_k(\varepsilon_k)\cos\theta_n   \nonumber \\
    && +i\sum\limits_{\varepsilon_1, \ldots, \varepsilon_{n-1} = \pm 1}
    c(\varepsilon_1, \ldots, \varepsilon_{n-1}, -1) \prod^{n-1}_{k=1}
    a_k(\varepsilon_k) \sin\theta_n 
\end{eqnarray*}
where, according to the induction assumption, we have
\begin{equation}
    |A_{n-1}|, \; |B_{n-1}| \leqslant 2^{(n-2)/2}.
\end{equation}
The sum in (\ref{eq:ca}) can be estimated in the following way
\begin{eqnarray*}
    \left|\sum_{\varepsilon_1, \ldots, \varepsilon_n = \pm 1}
    c(\varepsilon_1, \ldots, \varepsilon_n) \prod^n_{k=1} 
a_k(\varepsilon_k)
    \right| &=& 
    \Bigl| A_{n-1}\cos\theta_n + iB_{n-1}\sin\theta_n \Bigr|
    \nonumber\\
    & \leqslant &  \sqrt{|A_{n-1}|^2+|B_{n-1}|^2} \leqslant 
    2^{(n-1)/2}.
\end{eqnarray*}

Now we show that with the coefficients of (\ref{eq:Mo}) or (\ref{eq:Me}) the maximal value $2^{(n-1)/2}$ is 
attained. Due to (\ref{eq:exp}) we need to estimate the sum
\begin{equation}\label{eq:Ms}
    S_n = \frac{1}{2^{n+(n-1)/2}}\sum^2_{j_1, \ldots, j_n=1} a_{j_1, 
\ldots, j_n}
    e^{i\left(\theta^{(j_1)}_1 + \ldots + \theta^{(j_n)}_n\right)}
\end{equation}
and show that it can be equal to one by absolute value. First, let us consider 
the case of an odd $n$. From
(\ref{eq:Me}) we have
\begin{equation*}
    a_{j_1, \ldots, j_n} = 2^{(n+1)/2} (-1)^{\delta(j_1, \ldots, 
j_n)}, \quad\quad (j_1, \ldots, j_n) \in J.
\end{equation*}
Furthermore, from (\ref{eq:Mn}) it is 
\begin{equation*}
    S_n = \frac{1}{2^n i} 
    \left[\prod^n_{k=1}\left(e^{i \theta^{(1)}}_k + i e^{i 
\theta^{(2)}}_k\right) - \prod^n_{k=1}\left(e^{i \theta^{(1)}}_k - i 
e^{i \theta^{(2)}}_k\right)\right].
\end{equation*}
Taking into account that each term in these products can be written as
\begin{equation*}
    e^{i \theta^{(1)}}_k \pm e^{i \tilde{\theta}^{(2)}}_k, \quad
    \tilde{\theta}^{(2)}_k = \theta^{(2)}_k + \pi/2,
\end{equation*}
and using the relations (\ref{eq:tp}), $S_n$ can be simplified to
\begin{equation*}
    S_n = \frac{1}{i}\left(\prod^n_{k=1}\cos\theta^\prime_k \pm i
    \prod^n_{k=1}\sin\theta^\prime_k\right)\,e^{i(\varphi^\prime_1 + 
\ldots + \varphi^\prime_n)},
\end{equation*}
where $\theta^\prime_k = \theta_k - \pi/4$, 
$\varphi^\prime_k = \varphi_k + \pi/4$.
It is clear that the imaginary part of the sum $S_n$ takes its maximal absolute 
value $1$ when, for example, $\theta_k = \varphi_k = 0$ for $k = 1, \ldots, n$.

Now we consider the case of an even $n$. 
The coefficients $a_{j_1, \ldots, j_n}$ 
in this case come from (\ref{eq:Mo}) 
\begin{equation*}
    a_{j_1, \ldots, j_n} = 2^{n/2} (-1)^{\tilde{\delta}(j_1, \ldots, j_n)},
\end{equation*}
and the sum $S_n$ (\ref{eq:Ms}) becomes
\begin{eqnarray*}
    S_n &= \frac{1}{i2^n\sqrt{2}}
    (e^{i \theta^{(1)}}_n + e^{i \theta^{(2)}}_n) 
    \left[\prod^{n-1}_{k=1}\left(e^{i \theta^{(1)}}_k + e^{i 
\tilde{\theta}^{(2)}}_k\right) - \prod^{n-1}_{k=1}\left(e^{i 
\theta^{(1)}}_k - e^{i \tilde{\theta}^{(2)}}_k\right)\right]\\
&+  \frac{1}{i2^n\sqrt{2}} (e^{i \theta^{(1)}}_n e^{i \theta^{(2)}}_n)
\left[\prod^{n-1}_{k=1}\left(e^{i \tilde{\theta}^{(1)}}_k + 
e^{i \theta^{(2)}}_k\right) - \prod^{n-1}_{k=1}\left(e^{i 
\tilde{\theta}^{(1)}}_k - e^{i \theta^{(2)}}_k\right)
\right].
\end{eqnarray*}
According to (\ref{eq:tp}) $S_n$ can be simplified to 
\begin{eqnarray*}
    S_n &= \frac{e^{i\varphi}}{\sqrt{2}}
    \left[\left(i\prod^{n-1}_{k=1}\cos\theta^\prime_k \mp 
    \prod^{n-1}_{k=1}\sin\theta^\prime_k\right)\cos\theta_n 
   +\left(\prod^{n-1}_{k=1}\cos\theta^{\prime\prime}_k \pm 
i\prod^{n-1}_{k=1}\sin\theta^{\prime\prime}_k\right)\sin\theta_n\right],
\end{eqnarray*}
where $\theta^\prime_k = \theta_k - \pi/4$, $\theta^{\prime\prime}_k 
= \theta_k + \pi/4$ and
$\varphi = \varphi_1 + \ldots \varphi_n + (n-1) \pi/4$.
The imaginary part of $S_n$ is (when $\varphi = 0$)
\begin{eqnarray*}
    |\im(S_n)| &= \frac{1}{\sqrt{2}}
    \left| \cos\theta_n \prod^{n-1}_{k=1}\cos(\theta_k-\pi/4)
    \pm 
      \sin\theta_n \prod^{n-1}_{k=1}\sin(\theta_k+\pi/4)
    \right|.
\end{eqnarray*}
It is clear that this expression takes its maximal value $1$ when $\theta_k = 
\pi/4$, $k = 1, \ldots, n-1$, and $\theta_n = \pm \pi/4$. This completes the 
proof.
\end{proof}


\section*{References}

\end{document}